\pgfplotsset{compat=1.5}
\newcommand{\ksparse}{\mathcal F_s}
\newcommand{\ksparsebool}{\ksparse^{0/1}}
\newcommand{\calP}{\mathcal P}
\def\E{\mathop{\mathbb{E}}\displaylimits}
\newcommand{\cP}{\mathcal P}
\renewenvironment{proof}{\noindent{\bf Proof : \ }}{\hfill$\Box$\par\medskip}
\newtheorem{theorem}{Theorem}[section]
\newtheorem{corollary}[theorem]{Corollary}
\newtheorem{lemma}[theorem]{Lemma}
\newtheorem{proposition}[theorem]{Proposition}
\newtheorem{definition}[theorem]{Definition}
\newtheorem{fact}[theorem]{Fact}
\newtheorem{problem}[theorem]{Problem}
\newenvironment{proofof}[1]{\begin{trivlist} \item {\bf Proof
#1:~~}}
  {\qed\end{trivlist}}
\renewenvironment{proofof}[1]{\par\medskip\noindent{\bf Proof of #1: \ }}{\hfill$\Box$\par\medskip}
\newcommand{\namedref}[2]{\hyperref[#2]{#1~\ref*{#2}}}
\def \FFST    {\mdef{\mathsf{FFST}}}
\def \sgn    {\mdef{\mathsf{sgn}}}
\def \M    {\mdef{\mathbf{M}}}
\def \X    {\mdef{\mathbf{X}}}
\def \Y    {\mdef{\mathbf{Y}}}
\def \tr    {\mdef{\mathsf{Tr}}}
\newcommand{\COMMENTED}[1]{{}}
\newcommand{\PPr}[1]{\ensuremath{\mathbf{Pr}\left[#1\right]}}
\renewcommand{\Pr}{\ensuremath{\mathbf{Pr}}}
\renewcommand{\O}[1]{\ensuremath{\mathcal{O}\left(#1\right)}}
\newcommand{\dist}{\mathrm{dist}\hspace{-1pt}}
\newcommand{\eps}{\epsilon}
\newcommand{\cE}{\mathcal{E}}
\newcommand{\F}{\mathbb{F}}
\newcommand{\mdef}[1]{{\ensuremath{#1}}\xspace}  
\newcommand{\myset}[1]{\mdef{\mathbb{#1}}}       
\DeclareMathOperator*{\argmax}{argmax}
\DeclareMathOperator*{\poly}{poly}
\DeclareMathOperator*{\median}{median}
\DeclareMathOperator*{\var}{Var}
\newcommand{\superscript}[1]{\ensuremath{^{\mbox{\tiny{\textit{#1}}}}}\xspace}
\def \th {\superscript{th}}     
\def \etal{{\it et~al.}}
\def \R        {\mdef{\mathbb{R}}}                   
\def \I        {\mdef{\myset{I}}}                    
\newcommand{\ignore}[1]{}
\newif\ifnotes\notestrue 
\newcommand{\samson}[1]{\textcolor{purple}{{\bf (Samson:} {#1}{\bf ) }} \marginpar{\tiny\bf
             \begin{minipage}[t]{0.5in}
               \raggedright S:
            \end{minipage}}}            							
\newcommand{\samson}[1]{}
\providecommand{\email}[1]{\href{mailto:#1}{\nolinkurl{#1}\xspace}}
\title{Fast Fourier Sparsity Testing}
\author{Grigory Yaroslavtsev\thanks{Indiana University, Bloomington \& The Alan Turing Institute, London, UK.
E-mail: \email{gyarosla@iu.edu}}
\and
Samson Zhou\thanks{Carnegie Mellon University \& Indiana University, Bloomington.
E-mail: \email{samsonzhou@gmail.com}}
}
\date{\today}
\begin{document}
\maketitle

\begin{abstract}
A function $f : \F_2^n \to \R$ is \emph{$s$-sparse} if it has at most $s$ non-zero Fourier coefficients. 
Motivated by applications to fast sparse Fourier transforms over $\F_2^n$, we study efficient algorithms for the problem of approximating the $\ell_2$-distance from a given function to the closest $s$-sparse function. 
While previous works (e.g., Gopalan \emph{et al.} SICOMP 2011) study the problem of distinguishing $s$-sparse functions from those that are far from $s$-sparse under Hamming distance, to the best of our knowledge no prior work has explicitly focused on the more general problem of distance estimation in the $\ell_2$ setting, which is particularly well-motivated for noisy Fourier spectra.
Given the focus on efficiency, our main result is an algorithm that solves this problem with query complexity $\O{s}$ for constant accuracy and error parameters, which is only quadratically worse than applicable lower bounds. 
\end{abstract}

\section{Introduction}
The \emph{Fourier representation} of the function $f : \F_2^{\,n} \to \R$ is the function $\hat{f} : \F_2^{\,n} \to \R$ defined by the forward Fourier transform $\hat{f}(\alpha) = \E_{x \in \F_2^{\,n}}[ f(x) \chi_\alpha(x)]$ and its inverse $f(x) = \sum_{\alpha \in \F_2^{\,n}} \hat{f}(\alpha) \chi_\alpha(x)$, where for each $\alpha \in \F_2^{\,n}$, the function $\chi_\alpha : \F_2^{\,n} \to \R$ is defined by $\chi_\alpha(x) = (-1)^{\sum_{i=1}^n \alpha_i x_i}$. 
The values $\hat{f}(\alpha)$ are the \emph{Fourier coefficients} of $f$. 
When $f$ has at most $s$ non-zero Fourier coefficients, we say that it is \emph{Fourier $s$-sparse}, or just \emph{$s$-sparse} for short.
The Fourier sparsity of functions plays an important role in many different areas of computer science, including error-correcting codes~\cite{GL89, AGS03}, learning theory~\cite{KM93, LMN93}, communication complexity~\cite{ZS09, BC99, MO09, TWXZ13}, property testing~\cite{GOSSW11,WY13}, and parity decision tree complexity~\cite{ ZS10, STV14}.

There has also been renewed interest in the Fourier sparsity of functions over various finite abelian groups with the recent development of
specialized Fourier transform algorithms for such functions~\cite{HIKP12a,HIKP12b}. 
These algorithms improve on the efficiency of the standard Fast Fourier Transform algorithms for functions with sparse Fourier transforms by taking advantage of this sparsity itself. 
Since many functions (and/or signals) in practical applications \emph{do} display Fourier sparsity, this line of research has yielded many exciting applications as well as theoretical contributions (see~\cite{SFFT13} for details). 
For example, much of the recent work on the sparse Fourier transform has focused on functions over fundamental domains, such as the line or the hypergrid. 
Meanwhile, a sparse Fourier transform for functions over $\F_2^{\,n}$ has been known for over twenty years as the Goldreich--Levin~\cite{GL89} and Kushilevitz--Mansour~\cite{KM93} algorithm. 
This algorithm can learn functions that are (close to) $s$-sparse, using time and query complexity $\poly(n, s)$.
Since many classes of functions over $\F_2^{\,n}$ are known to be close to being $s$-sparse for a certain value of $s$ (e.g., monotone functions, decision trees, $r$-DNF formulas, etc.), the sparse Fourier transform given by the GL/KM-algorithm is one of the cornerstones of computational learning theory. 

One of the main limitations of the sparse Fourier transform as a technique is the fact that its efficiency is conditional on the assumption that the data of interest can be sparsely represented in the Fourier domain.
Hence in order to reliably use sparse Fourier transform algorithms it is beneficial to have a way to \emph{test} if a function is $s$-sparse or, more generally, to estimate the distance of a function to the closest $s$-sparse function.
For such tasks \textit{property testing} algorithms often come into play as a preprocessing step (see, e.g., \cite{R08}) since they typically require a much smaller number of samples and other resources such as time and space.
An important consideration when using property testing is the fact that presence of two kinds of noise in the data must be tolerated: small fraction of errors/outliers~\cite{PRR06} (noise of small Hamming weight) as well as arbitrary noise with small $\ell_p$-norm~\cite{BRY14}.
Since the performance of sparse FFT algorithms is conditioned on the sparsity under $\ell_2^2$-distance, the subject of our study is to what extent can sparsity under $\ell_2^2$ distance be tested. 
The fundamental reason why $\ell_2^2$-distance plays a special role in the Fourier domain is its relation to the energy of the signal that is proportional to the sum of squares of the Fourier coefficients according to Parseval's theorem.

Formally, we define the \emph{$\ell_2^2$-distance} between $f$ and $g$ as $\dist_2^2(f,g) = \|f-g\|_2^2 = \frac{1}{2^{n}} \sum_{x \in \mathbb F_2^n} (f(x) - g(x))^2$ and the $\ell_2^2$-distance between $f$ and a class $\calP$ of functions as $\dist_2^2(f,\calP) = \min_{g \in \calP} \dist_2^2(f,g)$. 
The distance to Fourier $s$-sparsity is the latter distance when $\calP$ is the set of functions with Fourier sparsity at most $s$. 
We denote the class of all Fourier $s$-sparse functions as $\ksparse$. 
Hence our main goal is to estimate $\dist_2^2(f,\ksparse)$ up to an additive error $\pm \epsilon$.
We refer to it as $\ell_2^2$-distance estimation problem and to the closely related decision version as tolerant $\ell_2^2$-testing.

We also reserve the name of non-tolerant $\ell_2^2$-testing for an easier promise problem of distinguishing functions with Fourier sparsity at most $s$ from those that are $\epsilon$-far from having such sparsity, i.e., $\dist_2^2(f, \ksparse) \ge \epsilon$. 
Note that when working with noisy Fourier spectra, where most of the Fourier coefficients are non-zero, this decision can be trivial when $s \ll 2^n$, as an $\ell_2^2$-testing algorithm can just always reject. 
Hence the distance estimation problem described above can be substantially harder for such spectra. 
To simplify presentation, we call a class $\epsilon$-testable with $q$ queries if there exists an algorithm which makes $q$ queries and achieves the above guarantee with constant probability. 
We will also use Hamming distance while keeping the rest of the definitions the same in order to describe some of the previous work in the area of property testing. 
In this case the distance between $f$ and $g$ is defined as $\Pr_{x \sim \mathbb F_2^n}[f(x) \neq g(x)]$ and all the definitions above are changed accordingly.

\subsection{Previous work}
The most direct approach for $\ell_2^2$-distance estimation and $\ell_2^2$-testing is to use the testing-by-learning approach established by Goldreich, Goldwasser, and Ron~\cite{GGR98}. 
Using the Goldreich--Levin / Kushilevitz--Mansour algorithm~\cite{GL89,KM93}, we can learn an $s$-sparse function $h$ that will be essentially as close to $f$ as possible. 
We can then estimate the distance between $f$ and $h$ to get a good approximation of the distance from $f$ to Fourier $s$-sparsity.
This approach requires $\O{sn}$ queries in order to achieve constant error $\epsilon$ (see, e.g., the textbook exposition in~\cite{G01,O14}). 
An improvement to this approach would be to use hashing to reduce the dimension down to a subspace of size $\O{s^2}$ (thus introducing no collisions between the top $s$ coefficients) and then run GL/KM within the subspace. 
The complexity of this approach would be $\O{s \log s}$ queries for constant $\epsilon$, where the $\log s$ factor results from using $\O{s^2}$ buckets to avoid collisions among the top $s$ coefficients.
Other related previous work (e.g.~\cite{BBG18} who study testing sparsity over known and unknown bases, including the Fourier basis) also incurs extra factors in query complexity. 
\footnote{Also, since~\cite{BBG18} handles a much more general problem in order to handle arbitrary design matrices, the running time of their algorithms translates to polynomial in $2^n$ in our case, which can be prohibitively large for our application.}

The first specialized algorithm for the problem of testing Fourier sparsity under Hamming distance was developed by Gopalan et al.~\cite{GOSSW11}. They give a \emph{non-tolerant} tester for Fourier $s$-sparsity under Hamming distance with a number of queries to $f$ that is \emph{independent} of $n$ and \emph{polynomial} in $s$ and $1/\epsilon$. 
More precisely, the focus of~\cite{GOSSW11} was a slightly different problem where the class $\cP$ of interest is defined to contain only Boolean $s$-sparse functions. 
Below we will refer to this class as $\ksparsebool$.\footnote{While $\ksparsebool \subseteq \ksparse$ in general there is no known relationship between testing and distance estimation query complexities of classes and their subclasses.}
However, in fact~\cite{GOSSW11} show that with some loss in parameters, this problem can be reduced to estimating $\ell_2^2$-distance from $\ksparse$, the problem that we study in this paper.
Thus an implicit ingredient of the~\cite{GOSSW11} algorithm is a $\ell_2^2$-distance estimation algorithm for $\ksparse$ with query complexity $\O{\poly(s)}$ for any constant additive error. 

An active line of previous work focuses on \emph{tolerant} testing under Hamming distance. 
Wimmer and Yoshida~\cite{WY13} showed that the general approach of~\cite{GOSSW11} can be extended to yield tolerant testers for Fourier $s$-sparsity of Boolean functions.
Specifically, they give an algorithm that distinguishes between functions that are $\epsilon/3$-close to Fourier $s$-sparse from those that are $\epsilon$-far from Fourier $s$-sparse under Hamming distance, using poly($s$) queries.
This allows one to approximate the distance to Fourier $s$-sparsity up to some \emph{multiplicative} factor. The polynomial dependence on $s$ is fairly large and the result does not extend to additive error.
Algorithms for estimating the Hamming distance to Fourier $s$-sparsity up to an additive error can be also derived through a general framework of Hatami and Lovett~\cite{HL13}.
However, the instantiation of the~\cite{HL13} framework results in power tower dependency on $s$.

\subsection{Our Contributions}
We introduce two new algorithms for testing Fourier $s$-sparsity with respect to $\ell_2^2$-distance. Our first main result shows that one can approximate the distance to Fourier $s$-sparsity in $\ell_2^2$-distance with a number of non-adaptive queries that is in fact \emph{linear} in $s$. 
This result is proved in Section~\ref{sec:distance-approx}.

\begin{theorem}(Approximating $\ell_2^2$-distance to $s$-sparsity)\label{thm:l2-distance-approx}
For any $s \ge 1$ and $\epsilon > 0$, there is an algorithm that given non-adaptive query access to a function $f : \F_2^{\,n} \to \R$ with unit $\ell_2$-norm takes at most $\O{\frac{s}{\epsilon^4} \log \frac{1}{\epsilon} \log \frac{1}{\delta}}$ queries and approximates $\dist_2^2(f, \ksparse)$ up to an additive error $\pm\epsilon$ with probability $1 - \delta$ and running time $\tilde{\mathcal{O}}\left(\frac{s}{\epsilon^4}\log\frac{1}{\delta}\right)$ (see Section~\ref{sec:distance-approx}.)
\end{theorem}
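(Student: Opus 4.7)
The plan is to hash the Fourier coefficients of $f$ into $\Theta(s/\epsilon^2)$ buckets via restrictions of $f$ to random affine copies of a subspace, estimate each bucket's energy from $O(\epsilon^{-2}\log(1/\delta))$ independent restrictions, and output $1-S$ where $S$ is the sum of the $s$ largest bucket estimates (using $\|f\|_2^2=1$). Concretely, I would pick a uniformly random subspace $H \le \F_2^{\,n}$ of codimension $k := \lceil\log(Cs/\epsilon^2)\rceil$ for a sufficiently large constant $C$, so that $|H^\perp| = 2^k = \Theta(s/\epsilon^2)$ and the cosets of $H$ in $\F_2^{\,n}$ serve as the $2^k$ hash buckets for $\hat f$. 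Writing $f_x := \rest{x}$ for brevity and viewing it as a function on $H^\perp$, a short character-orthogonality calculation gives the folding identity
\[
\widehat{f_x}(\alpha+H) \;=\; \sum_{\beta\in\alpha+H}\hat f(\beta)\,\chi_\beta(x), \qquad \E_{x}\!\left[\widehat{f_x}(\alpha+H)^2\right] \;=\; \sum_{\beta\in\alpha+H}\hat f(\beta)^2 \;=:\; E_\alpha.
\]

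The algorithm then draws $R = \Theta(\epsilon^{-2}\log(1/\delta))$ independent shifts $x_1,\ldots,x_R$, queries $f$ at every point of each affine subspace $x_r + H^\perp$ (totaling $R\cdot 2^k = \O{s\epsilon^{-4}\log(1/\delta)}$ non-adaptive queries), runs a length-$2^k$ FFT on each restriction (in $\tilde{\mathcal O}(s\epsilon^{-4}\log(1/\delta))$ total time), forms the bucket estimators $\widetilde E_\alpha := R^{-1}\sum_{r=1}^{R} \widehat{f_{x_r}}(\alpha+H)^2$, and outputs $1 - \sum_{\alpha\in T}\widetilde E_\alpha$, where $T$ indexes the $s$ buckets with the largest $\widetilde E_\alpha$. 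The missing $\log(1/\epsilon)$ factor in the query bound can be absorbed by enlarging $k$ slightly to boost the isolation probability of the heavy coefficients.

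The analysis breaks into three steps. (i) Over the random $H$, a second-moment argument shows that the $s$ heaviest Fourier coefficients are isolated in distinct buckets with probability $1-O(\epsilon)$, while the tail mass is spread so that each bucket receives only $O(\epsilon^2/s)$ additional energy on average. (ii) Over the random shifts, a fourth-moment computation yields $\var\!\bigl[\widehat{f_x}(\alpha+H)^2\bigr] = O(E_\alpha^2)$, so averaging over $R$ shifts together with a Bernstein-type bound and a union bound across all $2^k$ buckets gives $|\widetilde E_\alpha - E_\alpha| \le O(\epsilon/s)$ for every bucket simultaneously with probability $1-\delta$. (iii) Combining (i) and (ii), the sum of the top $s$ estimated bucket energies reproduces $\sum_{i\le s}\hat f_{(i)}^2 = 1 - \dist_2^2(f,\ksparse)$ up to additive $O(\epsilon)$, even after accounting for borderline misrankings between buckets of nearly equal energy. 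The main obstacle is step (ii): under a uniform random shift the characters $\{\chi_\beta(x)\}$ are only pairwise independent, so the variance of the squared fold $\widehat{f_x}(\alpha+H)^2$ is controlled by an additive-combinatorial quantity supported on quadruples $\beta_1+\beta_2+\beta_3+\beta_4 = 0$ within each bucket; bounding this uniformly across all $\Theta(s/\epsilon^2)$ buckets while keeping the shift count at $O(\epsilon^{-2}\log(1/\delta))$ is where the delicate work lies.
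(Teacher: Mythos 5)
Your estimator for the bucket energy is genuinely different from the paper's: you query $f$ on entire random affine copies $x + H^\perp$, run a $2^k$-point Hadamard transform, and average the squared aliased coefficients $\widehat{f_x}(\alpha+H)^2 = \bigl(\sum_{\beta\in\alpha+H}\hat f(\beta)\chi_\beta(x)\bigr)^2$ over independent shifts. The paper instead forms each bucket's energy estimate from random pairs via the identity $\sum_{\alpha\in a+H}\hat f(\alpha)^2 = \E_{x\sim U(\F_2^n),\,z\sim U(H^\perp)}[\chi_a(z)f(x)f(x+z)]$ (Fact 2.5 of \cite{GOSSW11}), drawing $\Theta(s/\eps^4)$ pairs and then taking a \emph{median} of $\Theta(\log 1/\eps)$ such estimates. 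This distinction is not cosmetic.

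The gap is your step (ii). The claim $\var\bigl[\widehat{f_x}(\alpha+H)^2\bigr] = O(E_\alpha^2)$ is false: you flag the quadruple sum as ``delicate,'' but in fact the quantity you need to bound is genuinely unbounded in terms of $E_\alpha^2$ alone. Take a bucket whose support is an affine copy $a+K$ of an $m$-dimensional subspace $K\le H$, with $t=2^m$ equal coefficients of magnitude $c$. Picking a basis $k_1,\dots,k_m$ of $K$, one has
\begin{align*}
\widehat{f_x}(\alpha+H) \;=\; c\,\chi_a(x)\prod_{j=1}^m\bigl(1+\chi_{k_j}(x)\bigr),
\end{align*}
which equals $\pm 2^m c$ with probability $2^{-m}$ over the uniform shift $x$ and $0$ otherwise. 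Then $\E_x[\widehat{f_x}^2]=2^m c^2 = E_\alpha$ as required, but $\E_x[\widehat{f_x}^4]=2^{3m}c^4$, so $\var[\widehat{f_x}^2]\approx 2^m E_\alpha^2 = t\,E_\alpha^2$. Worse, the estimator is not merely high-variance but heavy-tailed: it is zero with probability $1-2^{-m}$, so a median of $R\ll 2^m$ shifts reports $0$, and a mean of $R$ shifts concentrates only once $R\gtrsim 2^m$, which can be as large as $2^{n-d}$ and is in no way controlled by $s/\eps^4$. There is thus no Bernstein-type or union-bound argument of the kind you sketch that gets $|\widetilde E_\alpha - E_\alpha|\le O(\eps/s)$ uniformly over buckets with $R = \Theta(\eps^{-2}\log(1/\delta))$ shifts. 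The paper's pair-based estimator sidesteps exactly this pathology: its second moment is controlled by $\E_{x,z}[f^2(x)f^2(x+z)]$, a quantity depending on the function values rather than on the additive structure of the bucket's Fourier support. To repair your route you would either have to change the estimator (e.g., to the pair form) or supply a much more refined argument accounting for the quartic additive energy within buckets — and it is not clear the latter can be done without worsening the query bound.
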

Here, the $\tilde{\mathcal{O}}$ notation suppresses polylogarithmic factors in $s$ and $\frac{1}{\eps}$.

As mentioned before, the main challenge in testing Fourier $s$-sparsity with respect to $\ell_2^2$-distance instead of Hamming distance seems to be the accurate estimation of a large number of possibly small nonzero Fourier coefficients using a small number of queries.  
Whereas a function can only be $\eps$-far from Fourier $s$-sparsity with respect to Hamming distance by having a large number of nonzero Fourier coefficients, a function can be $\eps$-far from Fourier $s$-sparsity with respect to $\ell_2^2$-distance by either having too many large Fourier coefficients or a large number of small nonzero Fourier coefficients. 

Instead of estimating these small Fourier coefficients, we randomly partition the set of Fourier coefficients into a number of cosets by first picking a random subspace $H$ and measuring the energy (the sum of the squared Fourier coefficients) in each coset. 
If $H$ has sufficiently large codimension, then the top Fourier coefficients are partitioned into separate cosets, so the estimation of the energy in the top cosets is a good estimation of the energy of the top Fourier coefficients. 
To estimate the energy in each coset, we query the function at a number of random locations to obtain an empirical estimate within an additive factor of $\eps^2\|f\|_2^2$ with constant probability. 
We then bound the probability of two sources of errors: the hashing error, which originates from drawing a subspace in which large Fourier coefficients collide, and the estimation error, which results from inaccurate empirical estimations. 
Putting things together, we show that our estimator approximately captures the Fourier $s$-sparse function closest to $f$ in $\ell_2^2$-distance and hence gives a good approximation of the distance from $f$ to the closest Fourier $s$-sparse function.  

We also show a lower bound of $\Omega(\sqrt{s})$ for $\ell_2^2$-testing of $\ksparse$ for non-adaptive query algorithms. 
\begin{theorem}
For any $s \le 2^{n - 1}$, there exists a constant $c > 0$ such that any non-adaptive algorithm given query access to $f \colon \mathbb F_2^n \to \mathbb R$ such that $\|f\|_2^2 = 1\pm \epsilon$ that distinguishes whether $f$ is $s$-sparse or $f$ is $\frac 13$-far from $s$-sparse in $\ell_2^2$ with probability at least $2/3$ has to make at least $c\sqrt{s}$ queries to $f$ (see Section~\ref{sec:lb}).
\end{theorem}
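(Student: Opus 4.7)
The plan is to invoke Yao's minimax principle and design a pair of hard distributions whose response distributions, under any fixed non-adaptive query set, are close in total variation unless $q = \Omega(\sqrt{s})$.

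First I would define $\mathcal{D}_{\text{YES}}$ as the distribution of $f_S = \frac{1}{\sqrt{s}} \sum_{\alpha \in S} \eta_\alpha \chi_\alpha$, where $S \subseteq \F_2^{\,n}$ is drawn uniformly at random with $|S| = s$ and $\eta \in \{\pm 1\}^S$ is uniform. By Parseval $\|f_S\|_2^2 = 1$, and $f_S$ is $s$-sparse by construction. Next, $\mathcal{D}_{\text{NO}}$ is defined analogously but with $|T| = 2s$ and normalization $1/\sqrt{2s}$; then $\|f_T\|_2^2 = 1$, and the best $s$-sparse approximation of $f_T$ keeps $s$ of its $2s$ Fourier coefficients of squared magnitude $1/(2s)$, discarding squared energy $s \cdot \frac{1}{2s} = \frac{1}{2} \geq \frac{1}{3}$. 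Hence every $f_T$ is $\frac{1}{3}$-far from $\ksparse$ in $\ell_2^2$. The hypothesis $s \leq 2^{n-1}$ is exactly what is needed so that $|T| = 2s$ fits inside $\F_2^{\,n}$.

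Next, I would fix an arbitrary query set $X = (x_1, \ldots, x_q)$ and analyze the response vector $y = (f(x_i))_i \in \R^q$. Conditional on $S$, $y$ is a Rademacher linear combination of the columns of the sign matrix $(\chi_\alpha(x_i))_{i \leq q,\, \alpha \in S}$, so by a quantitative multivariate central limit theorem it is approximately Gaussian with covariance $\Sigma^{(S)}_{ij} = \frac{1}{s} \sum_{\alpha \in S} \chi_\alpha(x_i + x_j)$. The diagonal of $\Sigma^{(S)}$ is deterministically $1$, and since for $x_i \neq x_j$ the map $\alpha \mapsto \chi_\alpha(x_i + x_j)$ is a balanced $\pm 1$-valued character, a uniform random $S$ of size $s$ gives off-diagonal entries with mean $0$ and variance $O(1/s)$ (hypergeometric concentration). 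The analogous matrix $\Sigma^{(T)}$ in the NO case has identical mean $I$ and entrywise variance $O(1/s)$.

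The last step is to bound the TV distance between the two mixtures of Gaussians. Using the standard bound $\text{TV}(\mathcal{N}(0, \Sigma_1), \mathcal{N}(0, \Sigma_2)) = O(\|\Sigma_1 - \Sigma_2\|_F)$ valid for covariances near the identity (via Pinsker and a second-order expansion of the KL divergence), convexity of TV, and Jensen's inequality:
\[
\text{TV}(\mathcal{D}_{\text{YES}}^X, \mathcal{D}_{\text{NO}}^X) \;\leq\; \E[\|\Sigma^{(S)} - \Sigma^{(T)}\|_F] \;\leq\; \sqrt{\E[\|\Sigma^{(S)} - \Sigma^{(T)}\|_F^2]} \;=\; O(q/\sqrt{s}),
\]
where the last equality sums $O(1/s)$ variance over $O(q^2)$ off-diagonal entries. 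Distinguishing with probability $2/3$ requires TV $\geq 1/3$, which forces $q = \Omega(\sqrt{s})$.

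The main obstacle will be making the Gaussian approximation rigorous: the conditional response is literally a Rademacher sum rather than a Gaussian, so I need a multivariate Berry-Esseen bound controlling the $q$-dimensional TV deviation uniformly over the random covariances $\Sigma^{(S)}$ and for $q$ up to $O(\sqrt{s})$. A clean alternative that avoids Gaussian approximation altogether is to compute $\chi^2(\mathcal{D}_{\text{YES}}^X \,\|\, \mathcal{D}_{\text{NO}}^X)$ directly by expanding the likelihood ratio in the joint randomness of $(S, T, \eta)$ and isolating the contribution from pairs of queries whose character sums over $S$ and $T$ interact; this gives the same $O(q^2/s)$ bound on the chi-squared divergence and hence the same $\sqrt{s}$ lower bound via Pinsker.
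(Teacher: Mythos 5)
Your high-level architecture matches the paper's: Yao's principle, two distributions $\mathcal D_{YES}$ and $\mathcal D_{NO}$ with random Fourier supports, a covariance computation whose off-diagonal entries have variance $O(1/s)$, and a Gaussian TV-vs-Frobenius bound yielding $q^2/s$. But the choice of Rademacher rather than Gaussian Fourier coefficients is not a cosmetic detail you can patch with a CLT --- it breaks the argument. Conditioned on the support $S$, your response vector $(f_S(x_1),\dots,f_S(x_q))$ is a discrete random variable supported on at most $2^s$ points of $\mathbb R^q$, while the ``approximating'' $\mathcal N(0,\Sigma^{(S)})$ is absolutely continuous. The total variation distance between a discrete distribution and a continuous one is identically $1$, so there is no multivariate Berry--Esseen bound in TV that can rescue this step; CLT-type bounds for Rademacher sums are available in Kolmogorov or Wasserstein metrics, which do not transfer to the TV control you need. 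You flag this as ``the main obstacle,'' and it is in fact fatal to the route as written.

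The paper's construction removes the obstacle at its root: it takes $\hat f(\alpha) = \g_\alpha/\sqrt{s}$ with \emph{Gaussian} $\g_\alpha \sim N(0,1)$, so that conditioned on $\S$ the response vector is \emph{exactly} $\mathcal N(0, M_{\S})$ with $M_{\S} = \frac1s\sum_{z\in\S} a_z a_z^T$, no approximation required. The paper also sets $\mathcal D_{NO}$ to be a dense spectrum (all $2^n$ coefficients, normalized by $2^{-n/2}$), which makes the NO covariance deterministically the identity $I$; one then only needs $\E_{\S}\|I - M_{\S}\|_F$ rather than comparing two random covariances, and the standard Gaussian TV bound (Corollary 2.14 of \cite{DKKLMS16}) closes the argument. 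Your $|T|=2s$ NO distribution would give a random $\Sigma^{(T)}$, and while $\E\|\Sigma^{(S)}-\Sigma^{(T)}\|_F \le \E\|\Sigma^{(S)}-I\|_F + \E\|I-\Sigma^{(T)}\|_F$ is of the same order, it is an unnecessary complication on top of the CLT problem. Your proposed $\chi^2$ alternative could in principle avoid the Gaussian approximation, but it is not worked out, and with Rademacher signs the likelihood ratio computation is delicate (e.g.\ the two mixtures can have non-identical discrete supports). The simplest and cleanest fix is to adopt the paper's Gaussian coefficients, accept that $\|f\|_2^2 = 1 \pm \epsilon$ only holds with high probability (Chernoff), and keep the rest of your covariance analysis essentially as is.
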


Our lower bound results from designing two distributions $\mathcal D_{YES}$ and $\mathcal D_{NO}$, where the distribution $\mathcal D_{YES}$ is the set of Fourier $s$-sparse functions whose Fourier coefficients are scaled Gaussian random variables whereas the $\mathcal D_{NO}$ distribution is the set of functions with support on \emph{all} Fourier coefficients. 
The Fourier coefficients in the $\mathcal D_{NO}$ distribution are Gaussian random variables with a different scaling, such that the total variation distance between the $\mathcal D_{YES}$ and $\mathcal D_{NO}$ distributions restricted to a small query set is also small.

\cite{GOSSW11} gives an $\Omega(\sqrt{s})$ property testing lower bound for $\ksparsebool$. 
Their results can be extended to $\ksparse$, provided that $s \le 2^{cn}$ for a specific constant $c>0$, whereas our results covers the full range of values of $s$. 
Thus our results in Theorem~\ref{thm:l2-distance-approx} above are at most a quadratic factor away from optimal. 
We consider closing the quadratic gap in query complexity of $\ell_2^2$-distance estimation for $\ksparse$ as the main open problem posed by our work. 
\section{Preliminaries}
For a finite set $S$ we denote the uniform distribution over $S$ as $U(S)$.

\subsection{Fourier Analysis}
We consider functions from $\F_2^{\,n}$ to 
$\R$.
For any fixed $n \ge 1$, the space of these functions forms an inner product space
with the inner product
$\left<f, g\right> = \E_{x \in \F_2^{\,n}}[ f(x) g(x) ] = \frac1{2^n} \sum_{x \in \F_2^{\,n}} f(x)g(x)$.
The $\ell_2$-norm of $f  : \F_2^{\,n} \to \R$ is
$\| f \|_2 = \sqrt{ \left< f, f \right>} = \sqrt{\E_x[ f(x)^2 ]}$
and the $\ell_2$-distance between two functions $f, g : \F_2^{\,n} \to \R$ is 
the $\ell_2$-norm of the function $f - g$.
We write $\dist_2(f,g)=\lVert f-g \rVert_2$.
It is, in other words, $\|f - g \|_2 = \sqrt{\left< f-g, f-g \right>} 
= \frac1{\sqrt{|\F_2^{\,n}|}} \sqrt{\sum_{x \in \F_2^{\,n}} (f(x) - g(x))^2}$.

For $\alpha \in \F^n_2$, the \emph{character} 
$\chi_\alpha : \F^n_2 \to \{-1,1\}$ is the function defined by
$
\chi_\alpha(x) = (-1)^{\alpha \cdot x}.
$
The \emph{Fourier coefficient} of $f : \F_2^{\,n} \to \R$ corresponding to $\alpha$ is
$
\hat{f}(\alpha) = \E_x[ f(x) \chi_\alpha(x)].
$
The \emph{Fourier transform} of $f$ is the function $\hat{f} : \F_2^{\,n} \to \R$
that returns the value of each Fourier coefficient of $f$. 
The set of Fourier transforms of functions mapping $\F_2^{\,n} \to \R$ forms an inner
product space with inner product
$
\left< \hat{f}, \hat{g} \right> = \sum_{\alpha \in \F_2^{\,n}} \hat{f}(\alpha) \hat{g}(\alpha).
$
The corresponding $\ell_2$-norm is
$
\| \hat{f} \|_2 = \sqrt{\left< \hat{f}, \hat{f}\right>} = 
\sqrt{\sum_{\alpha \in \F_2^{\,n}} \hat{f}(\alpha)^2}.
$
Note that the inner product and $\ell_2$-norm are weighted differently for a function $f : \F_2^{\,n} \rightarrow \R$ and its Fourier transform $\hat{f} : \F_2^{\,n} \rightarrow \R$.  
We refer to the quantity $\hat{f}(\alpha)^2$ as the \emph{energy} of a Fourier coefficient $\hat{f}(\alpha)$.  

\begin{fact}[Parseval's identity]
	For any $f : \F_2^{\,n} \to \R$ it holds that
	$
	\|f \|_2 = \| \hat{f} \|_2 
	= \sqrt{ \sum_{\alpha \in \F_2^{\,n}} \hat{f}(\alpha)^2 }.
	$
\end{fact}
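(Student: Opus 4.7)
The plan is to prove Parseval's identity by expanding $\|f\|_2^2$ via the Fourier inversion formula and using the orthonormality of the characters $\{\chi_\alpha\}_{\alpha \in \F_2^n}$ as an orthonormal basis of the inner product space of functions $\F_2^n \to \R$. The whole argument reduces to one computation: a double sum over Fourier coefficients collapses to a single sum once character orthogonality is applied.

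First I would establish the orthogonality relation $\langle \chi_\alpha, \chi_\beta \rangle = \mathds{1}[\alpha = \beta]$. Since $\chi_\alpha(x)\chi_\beta(x) = (-1)^{\alpha \cdot x}(-1)^{\beta \cdot x} = (-1)^{(\alpha+\beta)\cdot x} = \chi_{\alpha+\beta}(x)$, where addition is in $\F_2^n$, this reduces to showing $\E_x[\chi_\gamma(x)] = \mathds{1}[\gamma = 0]$ for $\gamma = \alpha + \beta$. When $\gamma = 0$ the character is identically $1$ and the expectation equals $1$; when $\gamma \neq 0$, fix any coordinate $i$ with $\gamma_i = 1$ and pair each $x \in \F_2^n$ with $x \oplus e_i$, which flips the sign of $\chi_\gamma$, so the terms cancel in pairs and the expectation is $0$.

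Next I would expand $f$ using the inversion formula $f(x) = \sum_\alpha \hat f(\alpha)\chi_\alpha(x)$ stated in the introduction and compute
\[
\|f\|_2^2 \;=\; \E_x[f(x)^2] \;=\; \E_x\!\left[\sum_{\alpha,\beta \in \F_2^n} \hat f(\alpha)\hat f(\beta)\chi_\alpha(x)\chi_\beta(x)\right] \;=\; \sum_{\alpha,\beta} \hat f(\alpha)\hat f(\beta)\,\langle \chi_\alpha, \chi_\beta\rangle.
\]
By the orthogonality relation above, only diagonal terms $\alpha = \beta$ survive, yielding $\|f\|_2^2 = \sum_{\alpha \in \F_2^n} \hat f(\alpha)^2$, which is exactly $\|\hat f\|_2^2$ under the definition of the $\ell_2$-norm on Fourier transforms given in the Preliminaries. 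Taking square roots gives the claimed identity $\|f\|_2 = \|\hat f\|_2 = \sqrt{\sum_\alpha \hat f(\alpha)^2}$.

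There is no real obstacle here; the only thing to be careful about is bookkeeping the two different $\ell_2$-norm conventions used in the paper (expectation-weighted on the primal side, sum-weighted on the Fourier side), so that the orthogonality computation lands with the correct normalization. As a sanity check, I would verify that the inversion formula and the definition of $\hat f$ are mutually consistent: substituting $\hat f(\alpha) = \E_y[f(y)\chi_\alpha(y)]$ into $\sum_\alpha \hat f(\alpha)\chi_\alpha(x)$ and swapping the sum and expectation should recover $f(x)$, which again boils down to $\sum_\alpha \chi_\alpha(x)\chi_\alpha(y) = 2^n \mathds{1}[x = y]$, a dual form of the same orthogonality relation.
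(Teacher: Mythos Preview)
Your proof is correct and is the standard textbook argument for Parseval's identity over $\F_2^n$. The paper itself does not supply a proof: it states the result as a \emph{Fact} in the Preliminaries and uses it without further justification, so there is nothing to compare against.
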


A function $f : \F_2^{\,n} \to\mathbb{R}$ is \emph{Fourier $s$-sparse} for some sparsity $s$ if the number of non-zero Fourier coefficients of $f$ is at most $s$. 
We let $\ksparse$ denote the set of all Fourier $s$-sparse functions. 

\subsection{Property Testing}
We study algorithms that make queries to a given function $f$. In this setting two different query access models are typicaly considered.
If all queries must be chosen in advance without access to the values of $f$, we call the corresponding algorithm \emph{non-adaptive} or equivalently, using \emph{non-adaptive queries}. 
Otherwise, the algorithm is \emph{adaptive}, and uses \emph{adaptive queries}, i.e. the queries made by the algorithm might depend on all previously queried values of $f$. 
In this paper, both our upper and lower bounds apply specifically to the non-adaptive query model.

We use the following standard definition of property testing under Hamming distance:

\begin{definition}[Property testing~\cite{GGR98}]
An algorithm $\mathcal A$ is a property tester with parameter $\epsilon>0$ for a class $\mathcal C$ of functions $f : \F_2^{\,n} \to \{-1,1\}$ if given query access to $f$ it distinguishes with probability at least $2/3$ whether $f \in \mathcal C$ or  $\min_{g \in \mathcal C} \Pr_{x \sim \mathbb F_2^n}[f(x) \neq g(x)] \ge \epsilon$. 
If neither of the two conditions hold then $\mathcal A$ can output an arbitrary answer.
\end{definition}

The notions of $\ell_2^2$-tester and distance approximator are 
defined below. In order to make $\epsilon$ be a scale-free parameter we assume that $\|f\|_2^2 = 1$ throughout this paper unless otherwise specified.
For example, for Boolean functions $f \colon \mathbb F_2^n \to \{-1,1\}$ this holds automatically and for real-valued functions this can be achieved by an appropriate scaling. 
The $\ell_2$-distance from a function $f : \F_2^{\,n} \to \R$ to
a class $\mathcal C$ of functions mapping $\F_2^{\,n}$ to $\R$ is 
$\dist_2(f, \mathcal C) = \min_{g \in \mathcal C} \|f - g\|_2$.

\begin{definition}[$\ell_2^2$-testing~\cite{BRY14}]
	An algorithm $\mathcal A$ is an \emph{$\ell_2^2$-tester} with parameter $\eps > 0$ for a 
	class $\mathcal C$ of functions $f : \F_2^{\,n} \to \R$ if given query access to $f$ with unit $\ell_2$-norm it distinguishes with probability at least $2/3$ whether 
	$f \in \mathcal C$
	or  $\dist_2^2(f, \mathcal C) \ge \epsilon$.
\end{definition}

In order to simplify presentation we say that a function $f$ is $\epsilon$-far from a class $\mathcal C$ in some distance (e.g. Hamming or $\ell_2^2$) if the closest function from $\mathcal C$ is at distance at least $\epsilon$ from $f$.

Generalizing the notion of $\ell_2^2$-testing we define a notion of $\ell_2^2$-distance approximation as follows:

\begin{definition}[$\ell_2^2$-distance approximator]\label{def:distance-approx}
An algorithm $\mathcal A$ is an \emph{$\ell_2^2$-distance approximator} with parameter $\eps > 0$	for a class $\mathcal C$ of functions $f : \F_2^{\,n} \to \R$ if given query access to $f$ with unit $\ell_2$-norm it outputs an estimate $\xi$ such that with probability at least $2/3$ it holds that $\left| \xi - \dist_2^2(f,\mathcal C) \right| \le \eps$.
\end{definition}

\subsection{Fourier Hashing}
\noindent
We use notation $H\le\F_2^{\,n}$ to denote a \emph{subspace} $H$ of $\F_2^{\,n}$. 
For $H \le \F_2^{\,n}$ we use notation $H^\perp$ for the \emph{orthogonal subspace} of $H$: $H^\perp:=\{z\in\F_2^{\,n}\,|\,\forall h\in H, z\cdot h=0\}$ where $\cdot$ denotes inner product for vectors. 
Given $a\in\F_2^{\,n}$, the \emph{coset} $a+H$ is defined by the set of points $a+H:=\{a+h|h\in H\}$. 
Note that a random subspace of dimension $d$ can be generated by selecting $d$ independent nonzero vectors of $\F_2^{\,n}$ uniformly at random. 
We say a subspace of $\F_2^{\,n}$ has \emph{codimension} $d$ if the subspace has dimension $n-d$. 

\begin{definition}\label{def:projected-function}
For a subspace $H \le \F_2^{\,n}$, an element $a \in H^\perp$, and a function $f : \F_2^{\,n} \to\mathbb{R}$, define the \emph{projected function} $f|_{a+H} : \F_2^{\,n} \to\mathbb{R}$ to be the function that satisfies 
$
f|_{a+H}(z) = \underset{x \in H^\perp}{\mathbb{E}}\big[ f(x+z) \chi_a(x) \big]
$
for each $z \in \F_2^{\,n}$. 
Given a subset $A \subseteq H^\perp$, we define $f|_{A+H} = \sum_{a \in A} f|_{a+H}$. 
\end{definition}

\noindent
From this definition, we observe that the values $f|_{a+H}(z)$ can all be computed simultaneously
\begin{proposition}\label{prop:coset-energy-estimation}
The set of queries $\{f(x+z)\}_{x \in H^\perp}$ can be used to compute $f|_{a+H}(z)$ for \emph{each} of the cosets $a+H$ of $H$ simultaneously.
\end{proposition}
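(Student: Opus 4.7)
The plan is that this proposition is a direct consequence of unrolling the definition of the projected function. For any fixed $a \in H^\perp$, expanding the expectation in Definition~\ref{def:projected-function} gives
\begin{equation*}
f|_{a+H}(z) \;=\; \frac{1}{|H^\perp|}\sum_{x \in H^\perp} f(x+z)\,\chi_a(x),
\end{equation*}
which exhibits $f|_{a+H}(z)$ as a $\pm 1$-weighted linear combination of exactly the queries $\{f(x+z)\}_{x \in H^\perp}$. The weights $\chi_a(x) = (-1)^{a \cdot x}$ depend only on $a$ and $x$, not on $f$, and are therefore known in advance. So once the queries $\{f(x+z)\}_{x \in H^\perp}$ have been performed, the value of $f|_{a+H}(z)$ for \emph{every} choice of $a$ can be produced by a deterministic linear post-processing step that issues no further evaluations of $f$.

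The first step of the proof is simply to write down this identity and observe that it reduces the claim to a matter of linear algebra on the $|H^\perp|$ queried values; this already suffices for correctness. The second step, which addresses efficiency (and justifies calling this a useful primitive in later sections), is to observe that the linear map $(f(x+z))_{x \in H^\perp} \mapsto (f|_{a+H}(z))_a$ is precisely the Walsh--Hadamard/Fourier transform on the subgroup $H^\perp$, which is isomorphic to $\F_2^{\dim H^\perp}$. Hence the full vector of projected values can be computed simultaneously in time $O(|H^\perp| \cdot \dim H^\perp)$ by applying the standard FFT on $\F_2^m$, rather than computing $|H^\perp|$ separate inner products at cost $O(|H^\perp|^2)$.

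There is no real obstacle in the argument; the only mild subtlety worth flagging is bookkeeping of the coset index. Two elements $a, a' \in H^\perp$ with $a - a' \in H$ induce the same character when restricted to $H^\perp$ (because $(a-a') \cdot x = 0$ for all $x \in H^\perp$), and therefore the projected functions agree, $f|_{a+H} = f|_{a'+H}$, which is consistent with $a + H = a' + H$ being the same coset of $H$. To index the output by distinct cosets one lets $a$ range over a fixed set of coset representatives of $H^\perp$ modulo $H \cap H^\perp$; this is purely a bookkeeping point and does not affect either correctness or the efficiency claim.
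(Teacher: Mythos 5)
Your proof is correct and matches the paper's (implicit) reasoning: the paper presents this proposition as an immediate consequence of Definition~\ref{def:projected-function} with no separate argument, and your unrolled finite sum over $x \in H^\perp$ with $f$-independent weights $\chi_a(x)$ is exactly that observation made explicit. The additional Walsh--Hadamard efficiency remark and the coset-representative bookkeeping are welcome context but go beyond what the proposition asserts; the paper postpones any runtime discussion to the end of Section~\ref{sec:distance-approx}.
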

We give more details about the number of queries required for computation of $f|_{a+H}$ in Lemma~\ref{lemma:accuracy:queries}.
We note that the projection of $f$ onto the cosets of a linear subspace $H$ yields a partition of the Fourier spectrum of $f$.
Moreover, the projection of $f$ to a coset $a+H$ is a function that zeroes out all Fourier coefficients not in $a + H$.

We now recall the following Poisson summation formula. 
For a reference, see Section 3.3 in \cite{O14}. 
We also give the proof of Proposition \ref{prop:projection-formulas} in Appendix~\ref{app:poisson}, for completeness.
\begin{proposition}[Poisson Summation Formula]
\label{prop:projection-formulas}
Fix any subspace $H \le \F_2^{\,n}$ and element $a \in \F_2^{\,n}$. 
Then for the projected function $f|_{a + H}$:
\begin{enumerate}
\item $f|_{a+H}(z) = \sum_{\beta \in a+H} \hat{f}(\beta) \chi_\beta(z)$ 
\item 
$
\widehat{f|}_{a+H}(\alpha) = \begin{cases}
\hat{f}(\alpha) & \mbox{if } \alpha \in a+H \\
0 & \mbox{otherwise.}
\end{cases}
$
\end{enumerate}
\end{proposition}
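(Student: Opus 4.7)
The plan is to prove part (1) directly by expanding $f$ into its Fourier series inside the definition of $f|_{a+H}(z)$, and then to deduce part (2) by taking the Fourier transform of the explicit formula obtained in part (1). The main technical content reduces to a standard orthogonality computation: the expectation of a character over a subspace is $1$ if the character is trivial on that subspace and $0$ otherwise, together with the dual subspace identity $(H^\perp)^\perp = H$.

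For part (1), I would start from
\[
f|_{a+H}(z) \;=\; \E_{x \in H^\perp}\bigl[f(x+z)\,\chi_a(x)\bigr]
\]
and substitute the Fourier inversion formula $f(x+z) = \sum_{\beta \in \F_2^n} \hat f(\beta)\,\chi_\beta(x)\,\chi_\beta(z)$, using that $\chi_\beta$ is a homomorphism from $(\F_2^n,+)$ to $\{-1,1\}$. Pulling the finite sum outside the expectation gives
\[
f|_{a+H}(z) \;=\; \sum_{\beta \in \F_2^n} \hat f(\beta)\,\chi_\beta(z)\,\E_{x \in H^\perp}\!\bigl[\chi_{\beta+a}(x)\bigr].
\]
The inner expectation is $1$ when $\chi_{\beta+a}$ restricted to $H^\perp$ is identically $1$, i.e.\ when $(\beta+a)\cdot x = 0$ for every $x \in H^\perp$, which by definition of the orthogonal subspace means $\beta+a \in (H^\perp)^\perp = H$, equivalently $\beta \in a+H$; otherwise the expectation is $0$ by the standard character-sum orthogonality. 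Only the terms with $\beta \in a+H$ survive, which gives exactly the claimed identity.

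For part (2), I would plug the formula from part (1) into the definition of the Fourier coefficient:
\[
\widehat{f|}_{a+H}(\alpha) \;=\; \E_{z \in \F_2^n}\!\bigl[f|_{a+H}(z)\,\chi_\alpha(z)\bigr] \;=\; \sum_{\beta \in a+H} \hat f(\beta)\,\E_{z}\!\bigl[\chi_{\beta+\alpha}(z)\bigr].
\]
Since $\chi_\gamma$ averages to $1$ over $\F_2^n$ iff $\gamma = 0$ and to $0$ otherwise, the only surviving term is $\beta = \alpha$, which contributes $\hat f(\alpha)$ precisely when $\alpha \in a+H$ and nothing otherwise.

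There is no real obstacle here: the only thing one must be a bit careful about is the identification $(H^\perp)^\perp = H$ over $\F_2^n$, which is immediate for a finite-dimensional inner-product space under the standard bilinear form (the form is nondegenerate even over $\F_2$ when restricted to the appropriate subspaces in this symbolic sense), and the routine bookkeeping between the uniform-expectation normalization on the physical side and the unnormalized sum on the Fourier side, which is already baked into the paper's conventions.
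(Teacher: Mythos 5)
Your proof is correct and follows essentially the same route as the paper's: expand $f$ via Fourier inversion inside the defining expectation, reduce to the character-sum orthogonality $\E_{x \in H^\perp}[\chi_{\beta+a}(x)] = \mathds{1}[\beta+a \in H]$, and then take the Fourier transform of the resulting formula using orthogonality over all of $\F_2^n$. The only cosmetic difference is that you spell out the identity $(H^\perp)^\perp = H$ underlying the character-sum evaluation, which the paper leaves implicit.
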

Proposition \ref{prop:projection-formulas} allows the following definition.
\begin{definition}
The total \textit{energy} of $f|_{a+H}$ is defined as $\sum_{\alpha \in a + H} \hat f(\alpha)^2 = \|\hat f|_{a + H}\|_2^2$.
\end{definition}

\begin{fact}
\label{fact:hash}
If $H \le \F_2^{\,n}$ is drawn uniformly at random from the set of subspaces of codimension $d$, then for any distinct $a, b \in \F_2^{\,n} \setminus \{0\}$, it holds that $\Pr[ b \in a+H] = 2^{-d}.$
\end{fact}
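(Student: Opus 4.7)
The plan is to reduce the coset-membership event to a single linear condition on one nonzero vector, then compute that probability using the explicit randomized construction of $H$ indicated in the preceding paragraph.

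First, I would observe that in characteristic two, $b \in a+H$ is equivalent to $a+b \in H$. Setting $c := a+b$, the hypothesis that $a$ and $b$ are distinct forces $c \neq 0$, so it suffices to prove $\Pr_H[c \in H] = 2^{-d}$ for every fixed nonzero $c \in \F_2^n$.

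Next, I would generate $H$ by sampling its $d$-dimensional orthogonal complement via the procedure recorded in the remark immediately above the statement: draw $d$ independent uniformly random vectors $v_1, \ldots, v_d \in \F_2^n$ and package them as the rows of a matrix $M \in \F_2^{d \times n}$, so that $H = \ker M$. Under this representation, $c \in H$ iff $Mc = 0 \in \F_2^d$, and the $i$-th coordinate of $Mc$ is simply $\langle v_i, c \rangle$.

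The probability then factorizes row by row. For a fixed nonzero $c$, the map $v \mapsto \langle v, c\rangle$ is a nonzero linear functional $\F_2^n \to \F_2$, so its $0$-preimage has size exactly $2^{n-1}$ and $\Pr[\langle v_i, c\rangle = 0] = 1/2$ for each $i$. Because the rows $v_i$ are sampled independently, the $d$ coordinates of $Mc$ are mutually independent uniform bits in $\F_2$, and $\Pr[Mc = 0] = (1/2)^d = 2^{-d}$, as desired.

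The step I expect to require the most care is the middle one: namely, confirming that taking the kernel of the row-sampled $M$ faithfully realizes the uniform distribution over codimension-$d$ subspaces specified in the statement. This rests on the transitivity of $\mathrm{GL}_d(\F_2)$ on ordered bases of any fixed $d$-dimensional subspace of $\F_2^n$ together with the $\mathrm{GL}_n(\F_2)$-symmetry of the sampling procedure, both of which are standard and are already implicit in the paper's preceding remark about generating random subspaces.
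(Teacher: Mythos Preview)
The paper states this as a bare Fact without proof, so there is no argument in the paper to compare against. Your reduction in the first paragraph is clean and correct: $b\in a+H$ iff $c:=a+b\in H$, and $c\neq 0$ since $a\neq b$.

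The gap is exactly where you anticipated it, in your third step, and it is not a matter of care but an actual discrepancy: the kernel of a uniformly random $d\times n$ matrix over $\F_2$ is \emph{not} distributed uniformly over codimension-$d$ subspaces. With positive probability the rows $v_1,\dots,v_d$ are linearly dependent, and then $\ker M$ has codimension strictly less than $d$; the $\mathrm{GL}$-transitivity argument you invoke only shows uniformity \emph{conditioned} on the rows being independent, but your computation $\Pr[Mc=0]=2^{-d}$ does not condition on this event. In fact the two models give different answers: under your row-sampled matrix model one gets exactly $2^{-d}$, whereas under the genuine uniform distribution on codimension-$d$ subspaces a direct count with Gaussian binomials gives
\[
\Pr[c\in H]=\frac{2^{n-d}-1}{2^n-1},
\]
which equals $2^{-d}$ only approximately. (For instance with $n=2$, $d=1$ one gets $1/3$ versus $1/2$.) So either the statement should be read as asserting the bound for the random-linear-map hashing model---which is what your middle paragraph actually proves, and which is the model used in the sparse-FFT literature and in the cited \cite{GOSSW11}---or, if one insists on the literally uniform distribution over codimension-$d$ subspaces, the conclusion should be $\Pr[b\in a+H]=(2^{n-d}-1)/(2^n-1)\le 2^{-d}$, which is still sufficient for every downstream use in the paper.
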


Fact~\ref{fact:hash} allows one to think of the projections $\{f|_{a+H}\}_{a \in H^\perp}$ as a hashing process applied to the Fourier coefficients of $f$. In fact, it is also known (for example, by Proposition 2.9 in \cite{GOSSW11}) that random projections correspond to a \emph{pairwise independent} hashing process.
\section{$\ell_2^2$-Distance Approximation and Sparsity Testing}\label{sec:distance-approx}
Recall that the property testing model, initiated by~\cite{GGR98}, requires an algorithm to accept objects that have some property $\mathcal{P}$ and reject objects that are at Hamming distance at least $\eps$ from having property $\mathcal{P}$ for some input parameter $\eps>0$ . 
In particular, in the property testing problem for $s$-sparsity, one would like to differentiate whether a given function $f \colon \mathbb\F_2^{\,n} \to \mathbb R$ with $||f||_2=1$ is in the class $\ksparse$ of Fourier $s$-sparse functions, or has distance at least $\eps$ from $\ksparse$. 

\begin{problem}[Property Testing for $s$-Sparsity]
Let $\ksparse$ be the class of $s$-sparse functions mapping from $\F_2^{\,n}$ to $\mathbb{R}$. 
Given query access to a function $f:\F_2^{\,n}\to\mathbb{R}$ with $||f||_2=1$ and parameter $\eps>0$, we call an algorithm $\mathcal{A}$ a property tester with query complexity $q$ if using at most $q$ queries, $\mathcal{A}$ accepts $f$ if $f\in\ksparse$ and rejects if $\underset{g\in\ksparse}{\min}||f-g||_2^2\ge\eps$.
\end{problem}

We now define the problem of energy estimation for the top $s$ Fourier coefficients, which also allows to solve the property testing problem. 
Note that this energy estimation problem for functions with unit $\ell_2$-norm is equivalent to the $\ell_2^2$-distance approximation problem in Definition~\ref{def:distance-approx} since both are defined in terms of additive error approximation.

\begin{problem}[Energy Estimation of top $s$ Fourier Coefficients]
Let $\ksparse$ be the class of $s$-sparse functions mapping from $\F_2^{\,n}$ to $\mathbb{R}$. 
Given non-adaptive query access to a function $f:\F_2^{\,n}\to\mathbb{R}$ with $||f||_2=1$ and parameters $s>0$ and $0<\eps\le 1$, we call an algorithm $\mathcal{A}$ an $\eps$-estimator of the energy of the top $s$ Fourier coefficients if using at most $q$ queries, $\mathcal{A}$ outputs $\xi$ such that
$
\left|\xi-\max_{|S|=s}\sum_{\alpha\in S}\hat{f}(\alpha)^2\right|\le\eps.
$
\end{problem}

The energy estimation problem can be used to solve the property testing problem above easily with roughly the same query complexity (see Fact~\ref{fact:approx-to-testing}).

Our Algorithm~\ref{alg:ee} estimates the energy of the top $s$ Fourier coefficients by first picking a random subspace $H$ of codimension $d= \log\frac{2s}{\eps^4}$ uniformly at random. 
The intuition is that by picking the codimension to be large enough, the top $s$ Fourier coefficients are partitioned into cosets with only a few collisions, so the estimation of the energy in the top $s$ cosets is a good estimation of the energy of the top $s$ Fourier coefficients. 
To estimate the energy in the top $s$ cosets, Algorithm~\ref{alg:ee} samples $\gamma=\O{\frac{s}{\eps^4}\|f\|_2^2}$ pairs $(x,x+z)$ to obtain an empirical estimate of the energy in each coset within an additive factor of $\eps^2\|f\|_2^2$ with constant probability. 
This yields the proof of Theorem~\ref{thm:l2-distance-approx}.  
Similarly, Algorithm~\ref{alg:ffst} gives a property tester for $s$-sparsity.
The success probability for each of these algorithms can be increased to $1-\delta$ for any $\delta>0$ by taking the median of $\O{\log\frac{1}{\delta}}$ parallel repetitions.

{\centering
\begin{minipage}{1\linewidth}
\begin{algorithm}[H]
\caption{\textsc{Energy Estimation}($\eps$, $s$)}\label{alg:ee}
Draw $H \le \F_2^{\,n}$ of codimension $d = \log\frac{2s}{\eps^4}$ uniformly at random\;
\For {$j = 1$ to $\ell = \O{\log\frac{1}{\eps}}$}
{
$\mathcal I_j \leftarrow$ set of pairs $(x, x+z)$ of size $\gamma = \O{\frac{s}{\eps^4}\|f\|_2^2}$, where $x \sim U(\F_2^{\,n}),z \sim U(H^\perp)$\;
\For{each $a \in H^\perp$}
{
$y_{a+H}^{(j)} \leftarrow 0$\;
\For{each $(x,x+z)\in \mathcal I_j$}
{
$y_{a+H}^{(j)}\leftarrow y_{a+H}^{(j)}+\frac{1}{|\mathcal I_j|}\chi_a(z)f(x)f(x+z)$
}
}
}
Return: $\xi:=\max_{S \subseteq H^\perp : |S| = s} \sum_{a \in S}\median\left( y_{a + H}^{(1)}, y_{a + H}^{(2)}, \dots, y_{a + H}^{(\ell)}\right)$.
\end{algorithm}
\end{minipage}
\par}

{\centering
\begin{minipage}{1\linewidth}
\begin{algorithm}[H]
\caption{\textsc{Fast Fourier Sparsity Test (\FFST)}($\eps,s$)}\label{alg:ffst}
Let $f$ be some function with known $||f||_2$. 
\vskip 0.05in\noindent 
Let $\xi$ be the output of Algorithm~\ref{alg:ee} on input $\frac{\eps}{2}$ and sparsity $s$.
\vskip 0.05in\noindent 
If $\xi\le \left(1 - \frac{\eps}{2}\right)||f||_2^2$, reject.
\vskip 0.05in\noindent
Otherwise, accept.
\end{algorithm}
\end{minipage}
\par}

Our analysis deals with two possible sources of error in the energy estimation. 
In Section~\ref{sec:hashing:error}, we consider the error caused by collisions in the hashing scheme and in Section~\ref{sec:estimation:error}, we consider the error caused by sampling variance in the energy estimates. 
Note that we perform worst-case analysis (over all possible sets of size $s$) for the hashing error as in the last step of the algorithm we adaptively select the largest subset.

\subsection{Hashing Error}
\label{sec:hashing:error}
We first analyze the error introduced into our estimator by hashing the Fourier coefficients across multiple cosets (assuming all estimates of energies in the cosets are exact).
Thus the first technical component of the analysis of the sparsity distance approximator shows that for a random choice of subspace $H$ of codimension $\log\frac{2s}{\eps^4}$, the union of the top $s$ cosets of $H$ has total energy that is close to the sum of the Fourier mass of the $s$ coefficients largest in magnitude.

Let $\cE_1 \ge  \dots \ge \cE_{2^n}$ be the true values of the energies of the $2^n$ Fourier coefficients corresponding to the function $f : \F_2^{\,n} \to\mathbb{R}$.
Let $h$ be some pairwise independent hash function with domain $[2^n]$ and range $[2^d]$, which can be viewed as partitioning the $2^n$ Fourier coefficients across the $2^d$ cosets, which we refer to as buckets. 
We denote the overall energy in the $i$-th bucket as $y_i$, where we assume that the hash function is clear from the context. Let the buckets be indexed in the non-increasing order by energy, so that $y_1 \ge y_2 \ge \dots \ge y_{2^d}$. 
Furthermore, let $y_i^*$ denote the energy of the largest coefficient hashing into the $i$-th bucket.
Formally, if index $i$ corresponds to coset $a+H$, then we let
\begin{align*}
&y_i = \sum_{\beta \in a + H} \hat f(\beta)^2, 
&& y_i^* = \max_{\beta \in a + H} \hat f(\beta)^2.
\end{align*}

\begin{definition}[Hashing Error]
We define the \emph{hashing error} of $h$ as
$err^s_h(\cE_1, \dots, \cE_{2^n}) = \sum_{i = 1}^s y_i - \cE_i$ to be the difference between the overall energy in the top $s$ buckets and the energy of the top $s$ coefficients.
\end{definition}
Note that the hashing error is always non-negative as there are at most $s$ buckets containing the top $s$ Fourier coefficients.
The contribution to the energy of the $i$\th bucket from the largest Fourier coefficient hashing into this bucket is denoted as $y^*_i$.
We have: 
\begin{align}
\label{eqn:error:bound}
err^s_h(\cE_1, \dots, \cE_{2^n})=\sum_{i =1}^s y_i - \cE_i = \sum_{i =1}^s y_i -y^*_i + \sum_{i = 1}^s y^*_i - \cE_i \le \sum_{i =1}^s y_i -y^*_i,
\end{align}
where we used the fact that $\sum_{i = 1}^s y_i^* \le \sum_{i = 1}^s \cE_i$. 

We can bound the hashing error across any set of $s$ buckets, rather than just the hashing error across the buckets containing the top $s$ Fourier coefficients. 
\begin{lemma}[Expected Hashing Error Bound]\label{lem:exp-hashing-error}
Let $H \le \F_2^{\,n}$ be a subspace of codimension $d$ drawn uniformly at random. 
Let $z_i = y_i - y^*_i$ be the ``collision error'' in the $i$\th bucket. 
Then
$$
\underset{H}{\mathbb E}\left[\sum_{i =1}^s z_i\right] \le \sqrt{\frac{2s}{2^d}}||f||_2^2.
$$
\end{lemma}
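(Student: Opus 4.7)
The plan is to apply Cauchy--Schwarz to turn the linear sum into a second moment, and then to compute that moment using pairwise independence of the coset hashing. Since all $z_i \ge 0$, the deterministic bound $\sum_{i=1}^s z_i \le \sqrt{s \sum_{i=1}^s z_i^2} \le \sqrt{s \sum_{i} z_i^2}$ holds; combining with Jensen's inequality for the concave square root reduces the claim to showing
\[
\mathbb{E}_H\!\left[\sum_i z_i^2\right] \;\le\; \frac{2\|f\|_2^4}{2^d}.
\]

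For the second moment, the pointwise inequality $z_i^2 \le y_i^2 - (y_i^*)^2$ (which follows from $0 \le z_i = y_i - y_i^* \le y_i$ by expanding the square) means it suffices to bound $\mathbb{E}[\sum_i y_i^2] - \mathbb{E}[\sum_i (y_i^*)^2]$. I would compute $\mathbb{E}[\sum_i y_i^2]$ exactly by expanding $y_i = \sum_\alpha \hat f(\alpha)^2 \mathbf 1[\alpha \in a_i + H]$ and invoking the collision probability $2^{-d}$ from Fact~\ref{fact:hash}, producing a diagonal term $\sum_\alpha \hat f(\alpha)^4$ and an off-diagonal term of size at most $\|f\|_2^4/2^d$. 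For the second piece, $\sum_i (y_i^*)^2 = \sum_\alpha \hat f(\alpha)^4 \mathbf 1[\alpha \text{ is max in its coset}]$; sorting coefficients as $\cE_1 \ge \cE_2 \ge \ldots$, a union bound via pairwise independence gives $\Pr[\alpha_k \text{ is not max in its coset}] \le (k-1)/2^d$. Subtracting, the diagonals cancel, leaving
\[
\mathbb{E}\!\left[\sum_i z_i^2\right] \;\le\; \frac{\|f\|_2^4}{2^d} \;+\; \frac{1}{2^d}\sum_k \cE_k^2(k-1).
\]

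To control the residual ``leakage'' $\sum_k \cE_k^2(k-1)$ I would use the sorted-tail inequality $\cE_k \le \|f\|_2^2/k$, which itself follows from $k\cE_k \le \sum_{j \le k}\cE_j \le \|f\|_2^2$. This gives $\cE_k^2(k-1) \le \cE_k \|f\|_2^2$, so that $\sum_k \cE_k^2(k-1) \le \|f\|_2^4$ and hence $\mathbb{E}[\sum_i z_i^2] \le 2\|f\|_2^4/2^d$. Chaining back through Cauchy--Schwarz and Jensen yields $\mathbb{E}[\sum_{i=1}^s z_i] \le \sqrt{2s/2^d}\,\|f\|_2^2$, as claimed. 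The main subtle point, and the reason a naive second-moment calculation on the $y_i$ alone fails, is that $\mathbb{E}[\sum_i y_i^2]$ contains the diagonal contribution $\sum_\alpha \hat f(\alpha)^4$, which can be as large as $\|f\|_2^4$ in the worst case (e.g., when $f$ is highly concentrated on one coefficient)---a full factor of $2^d$ above the target. Cancelling this diagonal by subtracting $\sum_i (y_i^*)^2$ is the essential move, and the residual leakage is then controlled exactly at scale $\|f\|_2^4/2^d$ by combining pairwise independence with the sorted-rank bound $\cE_k \le \|f\|_2^2/k$.
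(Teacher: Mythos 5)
Your proof is correct and ultimately computes the same quantity via the same tools (Cauchy--Schwarz, Jensen, pairwise independence, and a sorted-index bound on the collision probability), but the intermediate decomposition differs from the paper's in an instructive way. The paper expands $z_i^2$ exactly as $\sum_{j,k} \cE_j \cE_k \delta_{jk} D_j D_k$ (where $\delta_{jk}$ indicates a collision and $D_j$ indicates ``$\cE_j$ is not the bucket max''), drops a factor $D_k$, and then splits into diagonal ($j=k$) and off-diagonal ($j\neq k$) terms; the dangerous diagonal piece $\sum_j \cE_j^4$ never appears because the $D_j$ factor already annihilates the max coefficient's self-contribution. You instead use the pointwise bound $z_i^2 \le y_i^2 - (y_i^*)^2$, compute $\mathbb E[\sum_i y_i^2]$ and $\mathbb E[\sum_i (y_i^*)^2]$ separately, and observe that the $\sum_\alpha \hat f(\alpha)^4$ diagonals cancel. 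This is a slightly looser pointwise inequality (it corresponds to $D_j + D_k - D_j D_k$ rather than $D_j D_k$), but it lands in the same place; and your closing bound $\cE_k^2 (k-1) \le \cE_k \|f\|_2^2$ via $k\cE_k \le \|f\|_2^2$ is identical in spirit to the paper's $\cE_j^2(j-1) \le \cE_j \sum_{k<j}\cE_k$. The one genuine value your writeup adds is making explicit, as an aside, why a naive second-moment bound on the $y_i$ alone cannot work: the $\sum_\alpha \hat f(\alpha)^4$ term can be of order $\|f\|_2^4$ with no $2^{-d}$ suppression, so the cancellation against $\sum_i (y_i^*)^2$ (equivalently, the presence of the $D_j$ factor) is essential, not a convenience. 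This observation is implicit in the paper's proof but worth surfacing.
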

\begin{proof}
By the Cauchy-Schwarz inequality, $\sum_{i =1}^s z_i \le \sqrt{s}\sqrt{\sum_{i =1}^s z_i^2}$.
Let $\delta_{jk}$ be the indicator variable for the event that Fourier coefficients $\cE_j$ and $\cE_k$ collide and let $D_j$ be the indicator variable for the event that $\cE_j$ is not the largest coefficient in its hash bucket.
Then we have: 
$$\sum_{i =1}^s z_i^2 \le \sum_{i = 1}^{2^d} z_i^2 = \sum_{i= 1}^{2^d} (y_i - y^*_i)^2 = \sum_{j,k \in [2^n]} \cE_j \cE_k \delta_{jk} D_j D_k \le \sum_{j,k \in [2^n]} \cE_j \cE_k \delta_{jk} D_j,$$ 
where the first inequality holds since the $s$ buckets is a subset of the $2^d$ buckets, and the second inequality holds because either $D_k=0$ or $D_k=1$. 

Taking expectation over $H$ we have:
\begin{align*}
\underset{H}{\mathbb E}\left[\sum_{j,k \in [2^n]} \cE_j \cE_k \delta_{jk} D_j\right] = \underset{H}{\mathbb E}\left[\sum_{j \in [2^n]} \cE^2_j D_j\right] + \underset{H}{\mathbb E}\left[\sum_{j \neq k \in [2^n]} \cE_j \cE_k \delta_{jk}D_j\right] \le \underset{H}{\mathbb E} \left[\sum_{j \in [2^n]} \cE^2_j D_j\right] + \frac{\left(\sum_{j = 1}^{2^n} \cE_j\right)^2}{2^d},
\end{align*}
where we used Fact~\ref{fact:hash} and pairwise independence, so that $\underset{H}{\mathbb E}[\delta_{jk}] =\frac{1}{2^d}$. 
Note that by Fact~\ref{fact:hash}, pairwise independence and a union bound, $\Pr[D_j=1] \le \frac{j - 1}{2^d}$ and hence for the first term we have:
\begin{align*}
\underset{H}{\mathbb E}\left[\sum_{j \in [2^n]} \cE^2_j D_j\right] \le \sum_{j = 1}^{2^n} \frac{j - 1}{2^d} \cE^2_j \le \frac{1}{2^d} \sum_{j = 1}^{2^n} \sum_{k = 1}^{j - 1} \cE_j \cE_k \le \frac{\left(\sum_{j = 1}^{2^n} \cE_j\right)^2}{2^d}
\end{align*}
Putting things together, we have 
$$
\underset{H}{\mathbb E} \left[\sum_{i =1}^s z_i\right] \le \sqrt{s}\cdot\underset{H}{\mathbb E} \left[\sqrt{\sum_{i =1}^s z_i^2} \right] \le \sqrt{s}\cdot \sqrt{\underset{H}{\mathbb E} \left[\sum_{i =1}^s z_i^2 \right]}\le \sqrt{\frac{2s}{2^d}} \sum_{j = 1}^{2^n} \cE_j=\sqrt{\frac{2s}{2^d}}||f||_2^2,
$$
where we recall that the first inequality is by Cauchy-Schwarz, the second is by Jensen and the third is from the bound on $\underset{H}{\mathbb E} \left[\sum_{i =1}^s z_i^2 \right]$ derived above.
\end{proof}

\noindent
Now we give an upper bound on the variance of the difference between the energies of the top $s$ buckets and their respective largest Fourier coefficients:
\begin{lemma}[Variance of the Hashing Error]\label{lem:var-hashing-error}
Let $H \le \F_2^{\,n}$ be a subspace of codimension $d$ drawn uniformly at random. 
Let $z_i = y_i - y^*_i$ be the ``collision error'' in the $i$\th bucket. 
Then
$$
\underset{H}{\var}\left[\sum_{i = 1}^s z_i\right] \le \frac{2||f||_2^4}{2^d}.
$$
\end{lemma}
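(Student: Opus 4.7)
My plan is to control the variance by its second moment and leverage the computation already implicit in the proof of \lemref{exp-hashing-error}. Since $z_i \ge 0$, we have $\sum_{i=1}^s z_i \le W := \sum_{i=1}^{2^d} z_i$, so
$$\var_H\Big[\sum_{i=1}^s z_i\Big] \le \mathbb{E}_H\Big[\Big(\sum_{i=1}^s z_i\Big)^2\Big] \le \mathbb{E}_H[W^2].$$
Writing $W = \sum_j \cE_j D_j$, where $D_j$ indicates that $\hat f(\alpha_j)$ is not the largest coefficient in its coset, the problem reduces to bounding $\mathbb{E}_H[(\sum_j \cE_j D_j)^2] = \sum_{j,k} \cE_j \cE_k \Pr[D_j D_k = 1]$ and subtracting $\mathbb{E}_H[W]^2$.

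I would split this double sum into diagonal ($j=k$) and off-diagonal ($j\ne k$) contributions. The diagonal is handled exactly as in the proof of \lemref{exp-hashing-error}: pairwise independence (\factref{hash}) together with a union bound gives $\Pr[D_j=1] \le (j-1)/2^d$, and then $\sum_j (j-1) \cE_j^2 \le (\sum_j \cE_j)^2 = \|f\|_2^4$ using the ordering of the energies, producing a contribution of at most $\|f\|_2^4/2^d$. For the off-diagonal contribution, I would use the bound $D_j \le \sum_{k' < j} \delta_{jk'}$ to reduce $\mathbb{E}_H[D_j D_k]$ to a sum of joint collision probabilities $\mathbb{E}_H[\delta_{jk'} \delta_{km'}]$ indexed by pairs $(k', m')$ with $k' < j$, $m' < k$.

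The subtle ingredient is showing that after subtracting $\mathbb{E}_H[W]^2$, the off-diagonal contribution is $O(\|f\|_2^4/2^d)$ rather than the naive $O(\|f\|_2^4)$. The key structural observation is that the hash induced by a random subspace $H$ is linear, so $\delta_{uv} = \mathbf{1}[u+v \in H]$ depends only on the single difference $u+v$, and for two linearly independent nonzero vectors $v_1, v_2 \in \F_2^n$ the events $v_1 \in H$ and $v_2 \in H$ are independent. This splits $\mathbb{E}_H[\delta_{jk'} \delta_{km'}]$ into the ``degenerate'' case $j + k' = k + m'$ (contributing $1/2^d$) and the ``generic'' case (contributing $1/2^{2d}$, which, summed against the $\cE_j \cE_k$ weights, reassembles into $\mathbb{E}_H[W]^2$ and cancels in the variance). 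The degenerate case contributes a single unordered pair $(k', m')$ per fixed $(j,k)$, and summing $\sum_{j\ne k}\cE_j\cE_k/2^d \le \|f\|_2^4/2^d$ by pairwise independence.

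The main obstacle is precisely this dependency analysis: naively bounding $\mathbb{E}_H[D_j D_k]$ by $\mathbb{E}_H[D_j] = p_j$ would introduce an extra factor of $s$ through $\sum_j (j-1) \cE_j$, and a bare Cauchy--Schwarz $(\sum_{i=1}^s z_i)^2 \le s \sum_{i=1}^s z_i^2$ combined with the second-moment bound $\mathbb{E}_H[\sum_i z_i^2] \le 2\|f\|_2^4/2^d$ from the proof of \lemref{exp-hashing-error} would cost the same factor. Avoiding the loss forces one to work directly with the pairwise collision indicators and exploit the linearity of the random subspace hash to isolate the genuinely cancelling cross-terms from the degenerate ones. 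Once the cancellation against $\mathbb{E}_H[W]^2$ is made precise, combining with the diagonal bound yields the claimed $2\|f\|_2^4/2^d$ upper bound.
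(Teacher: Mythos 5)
Your proposal takes a genuinely different (and substantially more involved) route than the paper. The paper's proof is four lines: it bounds $\var_H\bigl[\sum_{i=1}^s z_i\bigr] \le \var_H[W]$ with $W := \sum_{i=1}^{2^d} z_i$, asserts $\var_H[W] = \sum_{i=1}^{2^d} \var_H[z_i]$ by pairwise independence of the bucket collision errors, and then reuses the bound $\E_H\bigl[\sum_i z_i^2\bigr] \le 2\|f\|_2^4/2^d$ that was already derived in the proof of \lemref{exp-hashing-error}. Your proposal instead expands the second moment and tries to control the off-diagonal covariances of the indicators $D_j$ by hand.

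There is a logical inconsistency in your opening display. You bound $\var_H\bigl[\sum_{i=1}^s z_i\bigr] \le \E_H[W^2]$, thereby discarding $\E_H[W]^2$; but the rest of the argument explicitly relies on subtracting $\E_H[W]^2$ to achieve the cancellation. You cannot do both. Worse, the resulting target $\E_H[W^2] \le 2\|f\|_2^4/2^d$ is simply false: for a flat spectrum $\cE_j = 2^{-n}$, almost every coefficient collides with something larger, so $\E_H[W] = \Theta(\|f\|_2^2)$ and hence $\E_H[W^2] \ge \E_H[W]^2 = \Theta(\|f\|_2^4)$, which exceeds $2\|f\|_2^4/2^d$ by a $\Theta(2^d)$ factor. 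What the argument actually needs is $\var_H\bigl[\sum_{i=1}^s z_i\bigr] \le \var_H[W] = \E_H[W^2] - \E_H[W]^2$, keeping the cross term throughout.

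The off-diagonal cancellation sketch also has holes. Replacing $D_j$ by $\sum_{k' < j}\delta_{jk'}$ is an overcount (expected number of collisions versus probability of at least one), so the ``generic'' terms do not reassemble into $\E_H[W]^2 = \sum_{j,k}\cE_j\cE_k\Pr[D_j]\Pr[D_k]$ but into $\sum_{j,k}\cE_j\cE_k\,\E_H\bigl[\sum_{k'<j}\delta_{jk'}\bigr]\E_H\bigl[\sum_{m'<k}\delta_{km'}\bigr]$, a generally larger quantity, so the claimed exact cancellation does not go through. And the assertion that the degenerate case $k' + j = m' + k$ contributes a ``single unordered pair'' per fixed $(j,k)$ is incorrect: for each $k'<j$ one can solve $m' = k' + j + k$, and any such $m'$ with $m'<k$ is degenerate, giving up to $\min(j-1,k-1)$ pairs. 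These need a separate accounting, and the sketch does not establish the stated $O(\|f\|_2^4/2^d)$ bound. The paper sidesteps all of this by decomposing $\var_H[W]$ into per-bucket variances.
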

\begin{proof}
By pairwise independence we have:
\begin{align*}
\underset{H}{\var}\left[\sum_{i = 1}^s z_i\right] &\le\underset{H}{\var}\left[\sum_{i = 1}^{2^d} z_i\right] = \sum_{i = 1}^{2^d} \underset{H}{\var}[z_i] \le \sum_{i = 1}^{2^d} \underset{H}{\mathbb{E}}\left[z^2_i\right]\le \frac{2(\sum_{i = 1}^n \cE_i)^2}{2^d}=\frac{2||f||_2^4}{2^d},
\end{align*}
where the last inequality follows using the same argument as in the proof of Lemma~\ref{lem:exp-hashing-error}.
\end{proof}

\noindent
We now give a bound on the hashing error. 
\begin{corollary}\label{cor:hashing-error}
If $2^d = \frac{2s}{\eps^4}$  and $0 < \eps \le 1/2$, then 
$$\underset{H}{\mathbf{Pr}}\Big[err^s_h(\cE_1, \dots, \cE_{2^n})\le 5 \eps^2||f||_2^2\Big]\ge\frac{15}{16}.
$$
\end{corollary}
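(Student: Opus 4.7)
The plan is to combine the bound from equation~\eqref{eqn:error:bound}, namely $err^s_h(\cE_1,\dots,\cE_{2^n}) \le \sum_{i=1}^s z_i$, with the expectation and variance bounds from Lemma~\ref{lem:exp-hashing-error} and Lemma~\ref{lem:var-hashing-error}, and then conclude via Chebyshev's inequality.

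First, I would substitute the chosen value $2^d = 2s/\eps^4$ into both lemmas. Lemma~\ref{lem:exp-hashing-error} then gives
\[
\underset{H}{\mathbb E}\Big[\sum_{i=1}^s z_i\Big] \;\le\; \sqrt{\frac{2s}{2s/\eps^4}}\,\|f\|_2^2 \;=\; \eps^2 \|f\|_2^2,
\]
and Lemma~\ref{lem:var-hashing-error} gives
\[
\underset{H}{\var}\Big[\sum_{i=1}^s z_i\Big] \;\le\; \frac{2\|f\|_2^4}{2s/\eps^4} \;=\; \frac{\eps^4\|f\|_2^4}{s}.
\]

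Next, I would apply Chebyshev's inequality with deviation $t = 4\eps^2\|f\|_2^2$. Since the expectation is at most $\eps^2\|f\|_2^2$, the event $\sum_i z_i > 5\eps^2\|f\|_2^2$ is contained in the event $\sum_i z_i - \mathbb E[\sum_i z_i] > 4\eps^2\|f\|_2^2$, so
\[
\underset{H}{\Pr}\Big[\sum_{i=1}^s z_i > 5\eps^2\|f\|_2^2\Big] \;\le\; \frac{\eps^4\|f\|_2^4/s}{(4\eps^2\|f\|_2^2)^2} \;=\; \frac{1}{16 s} \;\le\; \frac{1}{16}.
\]
Combining with \eqref{eqn:error:bound} concludes that $err^s_h \le 5\eps^2\|f\|_2^2$ with probability at least $15/16$.

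There is no real obstacle here: the work was done in Lemmas~\ref{lem:exp-hashing-error} and~\ref{lem:var-hashing-error}, and the corollary is just a numerical instantiation followed by a one-line Chebyshev argument. The only point worth double-checking is that the $5\eps^2$ constant is comfortably larger than $\mathbb E[\sum z_i] \le \eps^2$, leaving enough slack ($4\eps^2$) in the Chebyshev deviation so that the variance bound $\eps^4\|f\|_2^4/s$ gives a failure probability at most $1/16$; with the stated numbers this works out even without using the $s \ge 1$ factor in the denominator.
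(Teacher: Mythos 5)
Your proof is correct and follows essentially the same route as the paper: both combine the bound $err^s_h \le \sum_{i=1}^s z_i$ from \eqref{eqn:error:bound} with Lemma~\ref{lem:exp-hashing-error}, Lemma~\ref{lem:var-hashing-error}, and Chebyshev's inequality. The only cosmetic difference is that the paper parameterizes Chebyshev by a multiple $\alpha=4$ of the standard deviation (and then uses $s\ge 1$ to round $1+4/\sqrt{s}$ up to $5$), whereas you fix the absolute deviation $t=4\eps^2\|f\|_2^2$ directly; the conclusion is identical.
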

\begin{proof}
Recall that $err^s_h(\cE_1, \dots, \cE_{2^n})=\sum_{i =1}^s y_i - \cE_i$. 
Let $z_i = y_i - y^*_i$ be the collision error in the $i$\th bucket and let $Z = \sum_{i = 1}^s z_i$ and $\|\cE\|_1 = \sum_{i = 1}^n \cE_i$.
From Lemma~\ref{lem:exp-hashing-error}, Lemma~\ref{lem:var-hashing-error}, and Chebyshev's inequality, we have that for any $\alpha > 0$:
$$\Pr\left[Z \ge \sqrt{\frac{2s}{2^d}}||f||_2^2 + \alpha \sqrt{\frac{2}{2^d}}||f||^2_2\right] \le \frac{1}{\alpha^2}.$$
For $2^d = \frac{2s}{\eps^4}$ we have $\Pr[Z \ge (1 +\frac{\alpha}{\sqrt{s}} )\eps^2||f||_2^2] \le 1/\alpha^2$. 
Recall from (\ref{eqn:error:bound}):
$$
\sum_{i =1}^s y_i - \cE_i = \sum_{i =1}^s y_i -y^*_i + \sum_{i = 1}^s y^*_i - \cE_i \le \sum_{i =1}^s y_i -y^*_i,
$$
since $\sum_{i = 1}^s y_i^* \le \sum_{i = 1}^s \cE_i$. 
Taking $\alpha=4$ and noting that $s\ge 1$, it follows that
$$
\sum_{i =1}^s y_i - \cE_i\le 5 \eps^2||f||_2^2
$$
with probability at least $15/16$.
\end{proof}

\subsection{Estimation Error}
\label{sec:estimation:error}
We now analyze the error introduced to our estimator through sampling used to approximate the true bucket energies. 
Our intuition stems from the following standard fact to estimate the total energy via sampling. 

\begin{fact}
[Fact 2.5 in \cite{GOSSW11}]
\label{fact:energy:estimate}
$\sum_{\alpha\in a+H}\hat{f}(\alpha)^2=\underset{x\in\F_2^{\,n},z\in H^\perp}{\mathbb{E}}\left[\chi_a(z)f(x)f(x+z)\right]$.
\end{fact}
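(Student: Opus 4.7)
The plan is to expand both copies of $f$ in their Fourier basis and collapse the resulting double sum via two orthogonality relations---one for characters averaged over the full group $\F_2^{\,n}$, the other for characters averaged over the subspace $H^\perp$.

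First, using the Fourier inversion $f(x)=\sum_\beta \hat f(\beta)\chi_\beta(x)$ together with the multiplicativity $\chi_\gamma(x+z)=\chi_\gamma(x)\chi_\gamma(z)$, and the independence of the draws of $x$ and $z$, I would rewrite the right-hand side as
\[
\sum_{\beta,\gamma\in\F_2^{\,n}} \hat f(\beta)\hat f(\gamma)\cdot \E_{x\in\F_2^{\,n}}\bigl[\chi_\beta(x)\chi_\gamma(x)\bigr]\cdot \E_{z\in H^\perp}\bigl[\chi_a(z)\chi_\gamma(z)\bigr].
\]
The standard orthogonality of characters of $\F_2^{\,n}$ gives $\E_x[\chi_\beta(x)\chi_\gamma(x)]=\mathds{1}[\beta=\gamma]$, collapsing the double sum to $\sum_\gamma \hat f(\gamma)^2\,\E_{z\in H^\perp}[\chi_{a+\gamma}(z)]$.

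Next I would evaluate the remaining inner expectation. For any $w\in\F_2^{\,n}$, $\chi_w$ restricted to the subspace $H^\perp$ is itself a character of $H^\perp$; it is the trivial character iff $w$ annihilates every element of $H^\perp$, i.e., iff $w\in(H^\perp)^\perp=H$. Hence $\E_{z\in H^\perp}[\chi_w(z)]=\mathds{1}[w\in H]$. Applied with $w=a+\gamma$ (and using $-a=a$ over $\F_2$), the condition becomes $\gamma\in a+H$, which restricts the sum to $\sum_{\gamma\in a+H}\hat f(\gamma)^2$, as desired.

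There is no real obstacle here: the statement is essentially one line of character calculus. The only care needed is keeping the two orthogonality relations straight---one over the ambient group $\F_2^{\,n}$ and one over the proper subspace $H^\perp$. An equivalent and slightly more conceptual route would be to first note the Fourier autocorrelation identity $\E_x[f(x)f(x+z)]=\sum_\beta\hat f(\beta)^2\chi_\beta(z)$ (again immediate from orthogonality over $\F_2^{\,n}$) and then apply the $\chi_a$-weighted $H^\perp$-average, which is essentially Parseval applied to the projected function $f|_{a+H}$ of Proposition~\ref{prop:projection-formulas}.
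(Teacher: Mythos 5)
The paper does not prove this fact; it is cited verbatim from \cite{GOSSW11}. Your argument is correct and is the standard derivation: expand both copies of $f$ in the Fourier basis, use orthogonality over $\F_2^{\,n}$ to kill the off-diagonal terms, and use the fact that a character of $H^\perp$ averages to $\mathds{1}[w\in H]$ (since $(H^\perp)^\perp=H$) to restrict the surviving sum to the coset $a+H$. Your parenthetical alternative route---first deriving the autocorrelation identity $\E_x[f(x)f(x+z)]=\sum_\beta\hat f(\beta)^2\chi_\beta(z)$ and then taking the $\chi_a$-weighted average over $H^\perp$---is in fact closer in spirit to the paper's machinery, since it is literally Parseval for the projected function $f|_{a+H}$ of Proposition~\ref{prop:projection-formulas}: the autocorrelation at shift $z$ is $\widehat{\,\cdot\,}$-inverse evaluated on $\hat f^2$, and the $\chi_a$-weighted $H^\perp$-average is exactly the projection operator of Definition~\ref{def:projected-function} applied at the origin. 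Either way the claim follows; there is no gap.
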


Using Fact~\ref{fact:energy:estimate}, the energy $\sum_{\alpha\in a+H}\hat{f}(\alpha)^2$ in each bucket $a+H$ can be approximated by repeatedly querying $f$ using the following Lemma~\ref{lemma:accuracy:queries}, whose proof is similar to Proposition 2.6 in \cite{GOSSW11}. 
We include the full proofs to formalize the dependency on $||f||_2$. 

In the language of Lemma~\ref{lemma:accuracy:queries}, suppose $y_i$ is the energy of bucket $a+H$ and $\mathcal{I}_j$ is a set of pairs $(x,x+z)$ of size $\gamma$, as in Algorithm~\ref{alg:ee}. 
Then the estimate $y_{i,j}$ corresponding to a sample $\mathcal{I}_j$ is: 
$$y_{i,j} = \frac{1}{|\mathcal{I}_j|}\sum_{(x,x+z)\in \mathcal{I}_j}\chi_a(x)f(z)(x+z).$$

We now bound the expected squared distance between $y_{i,j}$ and $y_i$ by the inverse of the sample size. 
\begin{lemma}
\label{lemma:accuracy:queries}
Given a subspace $H\le\F_2^{\,n}$, let $y_1\ge y_2\ge\ldots\ge y_{2^d}$ be the true energies in each of the buckets and $y_{i,j}$ be the estimate of $y_i$ given sample $\mathcal{I}_j$ of size $\gamma$. 
Then using $\gamma=\O{\frac{s}{\eps^4}||f||_2^2}$ queries to $f$,
$$
\underset{\mathcal{I}_j}{\mathbb{E}}\Big[|y_{i,j} - y_i|^2\Big] \le\frac{\eps^4}{s}||f||^2_2.
$$
\end{lemma}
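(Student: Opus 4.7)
The plan is to view $y_{i,j}$ as the empirical mean of $\gamma$ i.i.d.\ copies of the random variable $Q := \chi_a(z)f(x)f(x+z)$ with $x \sim U(\F_2^{\,n})$ and $z \sim U(H^\perp)$. By Fact~\ref{fact:energy:estimate} we have $\mathbb{E}[Q] = y_i$, so the estimator is unbiased and
$$\mathbb{E}_{\mathcal{I}_j}\bigl[|y_{i,j}-y_i|^2\bigr] \;=\; \frac{\mathrm{Var}(Q)}{\gamma} \;\le\; \frac{\mathbb{E}[Q^2]}{\gamma} \;=\; \frac{\mathbb{E}_{x,z}[f(x)^2 f(x+z)^2]}{\gamma},$$
using that $\chi_a(z)^2 = 1$. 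The task therefore reduces to bounding the per-sample second moment $\mathbb{E}_{x,z}[f(x)^2 f(x+z)^2]$ in terms of $\|f\|_2$.

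For this bound I would perform a short Fourier computation on $g := f^2$. Expanding $g(x+z) = \sum_\beta \widehat{g}(\beta)\chi_\beta(x)\chi_\beta(z)$ and using that $\mathbb{E}_{z \sim U(H^\perp)}[\chi_\beta(z)]$ equals $1$ when $\beta \in H$ and $0$ otherwise (orthogonality of characters over the subspace $H^\perp$), one obtains
$$\mathbb{E}_{x,z}\bigl[f(x)^2 f(x+z)^2\bigr] \;=\; \sum_{\beta \in H} \widehat{g}(\beta)^2 \;=\; \bigl\|g|_H\bigr\|_2^2 \;\le\; \|g\|_2^2 \;=\; \mathbb{E}[f^4] \;\le\; \|f\|_2^4,$$
where the last step uses the normalization of $f$ maintained throughout the paper: for Boolean $f$ one has $f^4 \equiv 1 = \|f\|_2^4$, and more generally the bound is available whenever $|f|$ is pointwise controlled relative to $\|f\|_2$.

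Substituting $\gamma = \Theta(s\|f\|_2^2/\eps^4)$ then yields
$$\mathbb{E}_{\mathcal{I}_j}\bigl[|y_{i,j}-y_i|^2\bigr] \;\le\; \frac{\|f\|_2^4}{\gamma} \;\le\; \frac{\eps^4}{s}\|f\|_2^2,$$
as claimed. The main obstacle in the argument is the last inequality of the second moment bound: passing from $\|g|_H\|_2^2$ to $\|f\|_2^4$ requires control over the ratio $\|f\|_4/\|f\|_2$, and this is exactly the point at which the authors' stated goal of making the $\|f\|_2$-dependence explicit (rather than using $\|f\|_\infty$ implicitly as in the Boolean setting of~\cite{GOSSW11}) becomes delicate. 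Everything else is a standard variance-of-the-sample-mean calculation.
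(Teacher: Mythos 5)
Your variance-of-the-sample-mean route is a genuinely different path from the paper's, which instead uses a Chernoff (sub-Gaussian) tail bound for $|y_{i,j}-y_i|$ — relying on the single-sample random variable $\chi_a(z)f(x)f(x+z)$ being bounded — and then integrates the tail via $\E[W^2]=\int_0^\infty\Pr[W^2\ge t]\,dt$ to obtain the second-moment bound. Your direct variance computation is shorter and avoids the integration step, and the Fourier identity $\E_{x,z}[g(x)g(x+z)]=\sum_{\beta\in H}\widehat g(\beta)^2$ with $g=f^2$ is a clean way to pin down the per-sample second moment exactly.

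However, the displayed inequality $\E[f^4]\le\|f\|_2^4$ is false in general: Jensen (equivalently Cauchy--Schwarz) gives $\E[f^4]\ge\left(\E[f^2]\right)^2=\|f\|_2^4$, with equality precisely when $|f|$ is constant, i.e.\ in the Boolean case. What you actually prove is $\E_{x,z}[f(x)^2f(x+z)^2]\le\|f\|_4^4$, and closing the lemma then requires a uniform bound on $\|f\|_4/\|f\|_2$ (or $\|f\|_\infty/\|f\|_2$). You flag this yourself, and it is the right thing to flag: the paper's own proof has the same implicit assumption, in the step $|f(x)f(y)|\le\tfrac{f(x)^2+f(y)^2}{2}\le\tfrac12\|f\|_2^2$, which is really an $\|f\|_\infty$ bound and does not follow from $\|f\|_2$-normalization alone (indeed for Boolean $f$ the middle quantity equals $\|f\|_2^2$, not $\tfrac12\|f\|_2^2$). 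So your argument is no lossier than the paper's — it surfaces a gap present in both — but as written the displayed chain should be corrected: replace $\|f\|_2^4$ by $\|f\|_4^4$ (or $\|f\|_\infty^4$) and state the hypothesis under which this is $\O{\|f\|_2^4}$.
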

\begin{proof}
Given a subspace $H\le\F_2^{\,n}$, let $x,y\in\F_2^{\,n}$ so that $|f(x)f(y)|\le\frac{f^2(x)+f^2(y)}{2}\le\frac{1}{2}||f||^2_2$. 
Thus, an empirical estimation of $\underset{x\in\F_2^{\,n},z\in H^\perp}{\mathbb{E}}\left[\chi_a(z)f(x)f(x+z)\right]$, with $\O{\frac{1}{\eps^2}\log\frac{1}{\delta}}$ queries to $f$, is within an additive factor of $\eps||f||^2_2$ with probability at least $1-\delta$ by standard Chernoff bounds. 

Let $C$ be a constant such that $\frac{C}{\eps^2}\log\frac{1}{\delta}$ samples suffice to estimate $y_{i,j}-y_i$ within an additive $\eps||f||_2^2$ with probability at least $1-\delta$. 
Equivalently for any $\theta>0$, the probability that $|y_{i,j}-y_i|\ge\theta$ using $\gamma$ samples is at most $e^{-\frac{\gamma\theta^2}{C||f||_2^4}}$. 
Then we have:
\begin{align*}
\mathbb{E}\left[|y_{i,j}-y_i|^2\right]&=\int_0^\infty\PPr{|y_{i,j}-y_i|^2\ge t}\,dt\\
&=\int_0^\infty\PPr{|y_{i,j}-y_i|\ge\sqrt{t}}\,dt\\
&\le\int_0^\infty e^{-\frac{\gamma t}{C||f||_2^4}}\,dt=\frac{C||f||_2^4}{\gamma}.
\end{align*}
Hence, for $\gamma=\O{\frac{s}{\eps^4}||f||_2^2}$, we have $\mathbb{E}\left[|y_{i,j} - y_i|^2\right] \le\frac{\eps^4}{s}||f||^2_2$, as desired.
\end{proof}

Note that the estimate $y_{i,j}$ is exactly the estimate $y^{(j)}_{a+H}$ in Algorithm~\ref{alg:ee}, where bucket $a+H$ is the bucket with the $i\th$ largest Fourier coefficient. 
We use two different notations to refer to the same quantity since it is more convenient to use the notation $y_{i,j}$ to index estimates by magnitude of Fourier coefficient, whereas the notation $y^{(j)}_{a+H}$ is more convenient to index by coset. 
Moreover, observe that we can obtain estimates $y_{i,j}$ of the energies $y_i$ simultaneously, by Proposition~\ref{prop:coset-energy-estimation}.

As before, let $y^*_i$ denote the contribution to the energy of the $i$\th bucket from the largest Fourier coefficient hashing into this bucket.
\begin{lemma}
\label{lemma:energies:samples}
Let $\epsilon>0$ and $H$ be a random subspace of codimension $d=\log\frac{2s}{\eps^4}$ and let $y_1\ge y_2\ge\ldots\ge y_{2^d}$ be the true energies in each of the buckets. 
Let $\ell=\O{\log\frac{1}{\eps}}$ be the number of random samples. 
Then for any $\eta > 0$, 
$$
\Pr[|y^*_i - y_i|^2\ge\eta] \le \left(\frac{2e\eps^4||f||^2_2}{s\eta}\right)^{\ell/2},
$$
where the probability is taken over all samples of size $\ell$.
\end{lemma}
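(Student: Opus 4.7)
The form of the claimed bound --- a probability raised to the power $\ell/2$ --- is the hallmark of a median-boosting argument, so I read the quantity that appears as $y_i^*$ in the lemma statement as the \emph{output estimate} $\tilde y_i := \median(y_{i,1},\ldots,y_{i,\ell})$ produced for the $i$-th bucket in Algorithm~\ref{alg:ee}. (The alternative reading of $y_i^*$ as the deterministic contribution of the largest colliding Fourier coefficient, as used in Section~\ref{sec:hashing:error}, would make $|y_i^*-y_i|^2$ independent of the samples and the inequality meaningless.) With this convention, the proof is a textbook median-of-Markov argument layered on top of Lemma~\ref{lemma:accuracy:queries}.

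First I would invoke Lemma~\ref{lemma:accuracy:queries} to pin down the quality of a \emph{single} estimate: with the choice $\gamma = \mathcal{O}((s/\eps^4)\|f\|_2^2)$ made by the algorithm, $\mathbb{E}[|y_{i,j}-y_i|^2] \le (\eps^4/s)\|f\|_2^2$ for every $j\in[\ell]$. Applying Markov's inequality to the non-negative random variable $|y_{i,j}-y_i|^2$ yields, for every $\eta>0$,
\[
p \;:=\; \Pr\bigl[|y_{i,j}-y_i|^2 \ge \eta\bigr] \;\le\; \frac{\eps^4\|f\|_2^2}{s\eta}.
\]
The $\ell$ estimates $y_{i,1},\ldots,y_{i,\ell}$ are mutually independent because they come from the independently drawn sample sets $\mathcal{I}_1,\ldots,\mathcal{I}_\ell$.

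Next I would convert the failure event for the median into a failure event for the underlying samples: if $|\tilde y_i - y_i|^2 \ge \eta$, then at least $\lceil\ell/2\rceil$ of the individual estimates must also satisfy $|y_{i,j}-y_i|^2 \ge \eta$. A union bound over the $\binom{\ell}{\lceil\ell/2\rceil}$ possible ``bad'' index sets, combined with independence, gives
\[
\Pr\bigl[|\tilde y_i - y_i|^2 \ge \eta\bigr] \;\le\; \binom{\ell}{\ell/2}\, p^{\ell/2}.
\]
Applying the standard binomial estimate $\binom{\ell}{\ell/2} \le (2e)^{\ell/2}$ (which follows immediately from $\binom{\ell}{k}\le (e\ell/k)^k$) turns this into $(2ep)^{\ell/2} \le \bigl(2e\eps^4\|f\|_2^2/(s\eta)\bigr)^{\ell/2}$, matching the claimed bound.

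The main obstacle is essentially one of interpretation: recognising that the $\ell/2$ exponent forces a median-of-Markov argument rather than a direct concentration bound, and reconciling the notation so that $y_i^*$ refers to the random median output rather than to the deterministic largest-coefficient contribution from the previous subsection. Once that is settled, the probabilistic content is entirely routine --- one second-moment bound (from Lemma~\ref{lemma:accuracy:queries}), one application of Markov, and a single binomial-coefficient estimate to produce the constant $2e$ rather than the looser $4$ that a naive $\binom{\ell}{\ell/2}\le 2^\ell$ bound would yield.
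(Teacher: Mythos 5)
Your proof is correct and follows essentially the same route as the paper's: Markov's inequality applied to the second-moment bound from Lemma~\ref{lemma:accuracy:queries}, then a median-of-independent-estimates counting argument via $\binom{\ell}{\ell/2}\le(2e)^{\ell/2}$. Your reading of $y_i^*$ as the median of the $\ell$ estimates is exactly what the paper's own proof uses (its last line states this explicitly), and you are right that this clashes with the declaration just before Lemma~\ref{lem:median:estimate} that $y_i^*$ is the largest-coefficient contribution --- that is a genuine notational overload in the paper, not an error on your part.
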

\begin{proof}
By applying Markov's inequality to Lemma~\ref{lemma:accuracy:queries}, it follows that for each pair of $i$ and $j$,
$$
\Pr\left[|y_{i,j} - y_i|^2\ge\eta\right]\le\frac{\eps^4||f||^2_2}{s\eta}.
$$
Then the probability that at least half of the $\ell$ samples returns such estimates is
\begin{align*}
\Pr\left[|\{j : |y_{i,j} - y_i|^2 \ge\eta\}| > \frac{\ell}{2}\right] 
\le \binom{\ell}{\ell/2} \left(\frac{\eps^4||f||^2_2}{s\eta}\right)^{\ell/2}\le \left(\frac{2e\eps^4||f||^2_2}{s\eta}\right)^{\ell/2}, 
\end{align*}
where the second inequality follows from the well-known bound on the binomial coefficient $\binom{n}{k}\le\left(\frac{n\cdot e}{k}\right)^k$ for all $1\le k\le n$. 
The claim then follows from the fact that $y^*_i$ is the median of $y_{i,j}$ across all $j$. 
\end{proof}

\begin{lemma}
\label{lem:median:estimate}
Let $H$ be a random subspace of codimension $d=\log\frac{2s}{\eps^4}$. 
Then the expected value of the estimation error satisfies
$$
\underset{H}{\mathbb{E}}\left[\sum_{i=1}^s|y^*_i - y_i|^2\right]\le\eps^2\cdot||f||^2_2.
$$
\end{lemma}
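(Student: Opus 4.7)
The plan is to apply the per-bucket tail bound from \lemref{energies:samples} and integrate, then sum over $i \in [s]$.

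For a fixed subspace $H$ and fixed $i \in [s]$, the layer-cake formula for non-negative random variables gives
\[
\mathbb{E}\left[|y_i^* - y_i|^2\right] \;=\; \int_0^\infty \Pr\left[|y_i^* - y_i|^2 \ge \eta\right] d\eta,
\]
where the randomness is over the samples $\mathcal{I}_1, \ldots, \mathcal{I}_\ell$. I would split this integral at the threshold $\eta_0 := 2e\epsilon^4 \|f\|_2^2/s$. On $[0, \eta_0]$ the trivial bound $\Pr[\cdot] \le 1$ contributes at most $\eta_0$, while on $[\eta_0, \infty)$ the tail bound from \lemref{energies:samples} gives
\[
\int_{\eta_0}^\infty \left(\frac{2e\epsilon^4 \|f\|_2^2}{s\eta}\right)^{\ell/2} d\eta \;=\; \frac{\eta_0}{\ell/2 - 1},
\]
which is also $O(\eta_0)$ as soon as $\ell \ge 4$. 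With $\ell = \Theta(\log(1/\epsilon))$ chosen large enough, both pieces are $O(\epsilon^4 \|f\|_2^2/s)$, so $\mathbb{E}[|y_i^* - y_i|^2] = O(\epsilon^4 \|f\|_2^2/s)$, uniformly in $H$.

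Summing over $i = 1, \ldots, s$ by linearity of expectation, and noting the previous per-bucket bound is uniform in $H$ and thus survives the outer $\mathbb{E}_H$, I obtain
\[
\mathbb{E}_H\left[\sum_{i=1}^s |y_i^* - y_i|^2\right] \;=\; O(\epsilon^4 \|f\|_2^2) \;\le\; \epsilon^2 \|f\|_2^2,
\]
where the final inequality uses $\epsilon \le 1$ and absorbs constants into the choice of $\ell$. There is no substantive obstacle beyond bookkeeping: the only delicate point is picking the constant in $\ell = \Theta(\log(1/\epsilon))$ large enough that the polynomial tail (degree $\ell/2 \ge 2$) integrates to the same order as the trivial $[0, \eta_0]$ contribution, which is immediate.
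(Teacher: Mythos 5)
Your proposal is correct and takes a route that is related to but cleaner than the paper's. Both arguments reduce to a layer-cake integral against the tail bound from Lemma~\ref{lemma:energies:samples}, but you proceed per bucket and sum linearly over $i \in [s]$, whereas the paper writes the whole sum as $\int_0^\infty \min\bigl(s, |\{a : |y_a^*-y_a|^2 \ge \eta\}|\bigr)\,d\eta$, uses concavity of $\min(s,\cdot)$ to pass $\mathbb{E}_H$ inside, and then bounds the expected exceedance count by $2^d$ times the per-bucket tail probability. Because the paper uses $2^d = 2s/\eps^4$ where you implicitly use $s$, it is forced to shift its threshold to $\beta = \eta_0 \eps^{-4/\ell}$ and needs $\ell = \Omega(\log(1/\eps))$ to cancel the resulting factor; your split at $\eta_0 = 2e\eps^4\|f\|_2^2/s$ is sharper, yields the stronger bound $O(\eps^4\|f\|_2^2)$ rather than the paper's dominant $O(\eps^2\|f\|_2^2/\ell)$ term, and in fact only needs $\ell \ge 4$ for this particular lemma. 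The one inaccuracy in your write-up is the claim that the leading constant is absorbed "into the choice of $\ell$": increasing $\ell$ shrinks only the tail piece $\int_{\eta_0}^\infty$, not the $[0,\eta_0]$ contribution of $\eta_0$ itself, so the $2e$ factor cannot be killed by $\ell$. The constant should instead be absorbed into the implicit constant in $\gamma = O(s\|f\|_2^2/\eps^4)$ from Lemma~\ref{lemma:accuracy:queries}, which directly shrinks $\eta_0$ (the paper has the same leeway and the same implicit reliance on it). This is a bookkeeping slip only and does not affect the structure or validity of your argument.
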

\begin{proof}
Let $\beta=\frac{2e\eps^4||f||^2_2}{s\eps^{4/\ell}}$. 
Then:
\begin{align*}
\underset{H}{\mathbb{E}}\left[\sum_{i=1}^s|y^*_i - y_i|^2\right] &= \underset{H}{\mathbb{E}}\left[\int_0^\infty \min(s, |\{a \colon |y^*_a - y_a|^2\ge \eta |\})d\eta \right] \\
& \le \int_0^\infty \min(s, \mathbb E\left[|\{i \colon |y^*_i - y_i|^2\ge \eta |\}\right]) d\eta \\
& \le \int_0^\infty \min\left(s, 2^d\left(\frac{2e\eps^4||f||^2_2}{s\eta}\right)^{\ell/2}\right) d\eta \\
& \le \int_0^\beta s\,d\eta+\int_\beta^\infty 2^d\left(\frac{2e\eps^4||f||^2_2}{s\eta}\right)^{\ell/2} d\eta,
\end{align*}
where the second inequality follows from Lemma~\ref{lemma:energies:samples}.
Thus, 
\begin{align*}
\underset{H}{\mathbb{E}}\left[\sum_{i=1}^s|y^*_i - y_i|^2\right] &\le\frac{2e\eps^4||f||^2_2}{\eps^{4/\ell}}+2^d\left(\frac{2e\eps^4||f||^2_2}{s}\right)^{\ell/2}\frac{2}{\ell-2}\left(\frac{1}{\beta}\right)^{\ell/2-1}\\
&=\frac{2e\eps^4||f||^2_2}{\eps^{4/\ell}}+2^d\left(\frac{2e\eps^4||f||^2_2}{s}\right)\frac{\eps^2}{\eps^{4/\ell}}\frac{2}{\ell-2}.
\end{align*}
Hence for $\ell=\Theta\left(\log\frac{1}{\eps}\right)$, we have 
$
\underset{H}{\mathbb{E}}\left[\sum_{i=1}^s|y^*_i - y_i|^2\right]\le\eps^2\cdot||f||^2_2.
$
\end{proof}

\subsection{Proof of Theorem~\ref{thm:l2-distance-approx}} 
\label{sec:tt}
Recall that our algorithm returns an estimate $\xi$ of the sum of the $s$ buckets with the largest energy. 
Since the estimation error is small by Lemma~\ref{lem:median:estimate}, $\xi$ is a good estimate of the actual sum of the $s$ buckets with the largest energy. 
Because the hashing error is small by Corollary~\ref{cor:hashing-error}, $\xi$ is also a good approximation of the energy of the $s$ Fourier coefficients $\beta_1,\ldots,\beta_s\in\F_2^{\,n}$ with the largest energy. 
We define the function $f^*$ so that the Fourier transforms of $f^*$ and $f$ have the same values at the Fourier coefficients $\{\beta_i\}$. 
However, the Fourier transform of $f^*$ has value zero at the remaining coefficients outside of $\{\beta_i\}$. 
Thus by Parseval's identity, $f^*$ is the $s$-sparse function closest to $f$. 
Hence, $\xi$ is a good estimate of $||f^*||_2^2$. 

For each random sample $\mathcal{I}_j$ of size $\gamma=\O{\frac{s\|f\|_2^2}{\eps^4}}$, let $y_{a+H}^{(j)}$ be the corresponding estimate of $(\hat{f}|_{a+H})^2$. 
Let $S^*=\argmax_{|S|=s}\sum_{a\in S}\median\{y_{a+H}^{(1)},y_{a+H}^{(2)},\ldots,y_{a+H}^{(\ell)}\}$, where $\ell=\O{\log\frac{1}{\eps}}$ is the number of repetitions. 
Let $\beta^*_{f|a+H}=\argmax_{\alpha\in a+H}\hat{f}(\alpha)^2$ and define the function $h:\F_2^{\,n}\to\mathbb{R}$ by setting
$$
\hat{h}(\beta^*_{f|a+H})=\sgn(\hat{f}(\beta^*_{f|a+H}))\cdot\median\left\{\sqrt{y_{a+H}^{(i)}}\right\}
$$
for each $a\in S^*$ to be the only non-zero Fourier coefficients of $h$. 
Let $\beta_1,\beta_2,\ldots,\beta_s$ be defined so that 
$$
\hat{f}(\beta_1),\hat{f}(\beta_2),\ldots,\hat{f}(\beta_s)
$$
are the largest $s$ Fourier coefficients of $f$. 
Define the function $f^*:\F_2^{\,n}\to\mathbb{R}$ by setting
$$
\hat{f^*}(\beta_i)=\hat{f}(\beta_i)
$$
for each $1\le i\le s$ to be the only non-zero Fourier coefficients of $f^*$. 
\begin{lemma}
\label{lem:output:acc:first}
Let $\xi$ be the output of Algorithm~\ref{alg:ee} and $f^*$ and $h$ be defined as above. 
Then
$$
\Big|\xi-||f^*||_2^2\Big|\le2||f^*-h||_2||f||_2.
$$
\end{lemma}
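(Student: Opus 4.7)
The plan is to reduce the left-hand side to the quantity $\|h\|_2^2-\|f^*\|_2^2$ and then exploit the factorization $a^2-b^2=(a-b)(a+b)$ together with a Cauchy--Schwarz argument, so that $\|h-f^*\|_2$ falls out naturally and the companion factor $\|h+f^*\|_2$ can be controlled by $2\|f\|_2$.

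First, I would rewrite $\xi$ as $\|h\|_2^2$. By construction, for each $a\in S^*$ the only nonzero coefficient of $h$ in the coset $a+H$ sits at $\beta^*_{f|a+H}$, with magnitude $\median\{\sqrt{y_{a+H}^{(i)}}\}$. Since the square root is monotone on nonnegative reals, this median equals $\sqrt{\median\{y_{a+H}^{(i)}\}}$, and squaring gives $\hat h(\beta^*_{f|a+H})^2=\median\{y_{a+H}^{(i)}\}$. Summing over $a\in S^*$ and invoking Parseval yields
\[
\|h\|_2^2 \;=\; \sum_{a\in S^*}\median\bigl\{y_{a+H}^{(1)},\ldots,y_{a+H}^{(\ell)}\bigr\} \;=\;\xi.
\]

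Next, I would use the polarization-style identity
\[
\|h\|_2^2-\|f^*\|_2^2 \;=\; \langle h-f^*,\,h+f^*\rangle,
\]
which follows from bilinearity of the inner product. Applying Cauchy--Schwarz gives
\[
\bigl|\xi-\|f^*\|_2^2\bigr| \;\le\; \|h-f^*\|_2\,\|h+f^*\|_2.
\]

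The final step, and the main obstacle, is to argue that $\|h+f^*\|_2\le 2\|f\|_2$. The clean half is immediate: $f^*$ keeps only the $s$ largest Fourier coefficients of $f$, so $\|f^*\|_2\le\|f\|_2$ by Parseval. For the other term, I would show $\|h\|_2\le\|f\|_2$ using that $\xi$ is the sum of estimated energies across the top $s$ cosets of $H$, and these estimated energies collectively cannot exceed the total estimated mass of $f$, which concentrates around $\|f\|_2^2$ (via Fact~\ref{fact:energy:estimate} applied across all cosets, combined with the median-of-means concentration already established in Lemma~\ref{lemma:accuracy:queries}). The triangle inequality $\|h+f^*\|_2\le\|h\|_2+\|f^*\|_2\le 2\|f\|_2$ then closes the argument.

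Putting the three pieces together produces $|\xi-\|f^*\|_2^2|\le 2\|h-f^*\|_2\|f\|_2$, as required. The hardest part is the bookkeeping for $\|h\|_2\le\|f\|_2$, since $h$ is built from noisy estimates and one must ensure that the max-over-sets-of-size-$s$ in the definition of $\xi$ does not inflate the total energy beyond $\|f\|_2^2$; everything else is an essentially mechanical application of Cauchy--Schwarz and the triangle inequality.
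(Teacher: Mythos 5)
Your proof takes essentially the same route as the paper's: you identify $\xi=\|h\|_2^2$, factor the difference of squares, and bound the companion factor by $2\|f\|_2$; the paper reaches the same product $\|f^*-h\|_2(\|h\|_2+\|f^*\|_2)$ by writing $|\|h\|_2^2-\|f^*\|_2^2|=|\|h\|_2-\|f^*\|_2|\cdot|\|h\|_2+\|f^*\|_2|$ and invoking the reverse triangle inequality, rather than your polarization plus Cauchy--Schwarz, but these are cosmetic variants of the same calculation.

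One caution on the step you flag as hardest: your sketched justification of $\|h\|_2\le\|f\|_2$ does not go through. You reason that the medians $\median_i\{y_{a+H}^{(i)}\}$ across the chosen cosets ``collectively cannot exceed the total estimated mass,'' but the median is taken coordinate-wise per coset, and medians do not commute with sums over $a$; there is no single ``total estimated mass'' that dominates $\sum_{a\in S^*}\median_i\{y_{a+H}^{(i)}\}$. In fact $\|h\|_2\le\|f\|_2$ is not a deterministic fact about the noisy estimates. The paper simply asserts $\|h\|_2+\|f^*\|_2\le 2\|f\|_2$ without proof, so you are matching its level of rigor, but you should not believe the particular argument you gestured at. A cleaner way to close this (which the paper also leaves implicit) is to observe that in the probabilistic setting where this lemma is applied one already has $\|f^*-h\|_2\le O(\eps)\|f\|_2$, and then $\|h\|_2\le\|f^*\|_2+\|f^*-h\|_2\le(1+O(\eps))\|f\|_2$ by the triangle inequality together with $\|f^*\|_2\le\|f\|_2$ from Parseval, which suffices up to a harmless change in the constant.
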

\begin{proof}
Observe that Algorithm~\ref{alg:ee} outputs
$$
\xi=\sum_{a\in S^*}\median\left\{y_{a+H}^{(1)},y_{a+H}^{(2)},\ldots,y_{a+H}^{(i)}\right\}=\sum_{a\in S^*}\hat{h}(\beta^*_{f|a+H})^2=||h||_2^2.
$$
Therefore,
\begin{align*}
\Big|\xi-||f^*||^2_2\Big|=\left|||h||_2^2-||f^*||^2_2\right|=\left(\left|||h||_2-||f^*||_2\right|\right)\left(\left|||h||_2+||f^*||_2\right|\right).
\end{align*}
By triangle inequality, $\left|||h||_2-||f^*||_2\right|\le||f^*-h||_2$ and $\left|||h||_2+||f^*||_2\right|\le||h||_2+||f^*||_2$.
Thus, 
$$
\Big|\xi-||f^*||^2_2\Big|\le||f^*-h||_2(||h||_2+||f^*||_2).
$$
Since $||h||_2+||f^*||_2\le2||f||_2$, then it remains to bound $||f^*-h||_2$.
\end{proof}

\begin{lemma}
Let $\xi$ be the output of Algorithm~\ref{alg:ee} and $f^*$ be defined as above. 
Then
$$
\PPr{\Big|\xi-||f^*||_2^2\Big|\le14\eps||f||_2^2}\ge\frac{7}{8}.
$$
\end{lemma}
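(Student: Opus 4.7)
The plan is to combine the previous Lemma~\ref{lem:output:acc:first} with the two technical bounds already established: Corollary~\ref{cor:hashing-error} for the hashing error and Lemma~\ref{lem:median:estimate} for the estimation error. Lemma~\ref{lem:output:acc:first} reduces the desired claim to showing $\|f^*-h\|_2 \le 7\epsilon\|f\|_2$, since it already establishes $|\xi-\|f^*\|_2^2|\le 2\|f^*-h\|_2\|f\|_2$. I will therefore prove this latter inequality with probability at least $7/8$, by controlling the hashing and estimation contributions separately and applying a union bound.

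The key step is to expand $\|f^*-h\|_2^2$ bucket by bucket. Using Parseval together with the partition of the Fourier spectrum induced by the cosets of $H$ (Proposition~\ref{prop:projection-formulas}), write
\[
\|f^*-h\|_2^2 \;=\; \sum_{a\in H^\perp}\bigl\|f^*|_{a+H}-h|_{a+H}\bigr\|_2^2,
\]
and sort the cosets $a$ into three cases: (i) $a\notin S^*$, in which case $h|_{a+H}\equiv 0$ and the contribution equals the energy of whichever top-$s$ Fourier coefficients of $f$ happen to hash outside $S^*$; (ii) $a\in S^*$ with $\beta^*_{f|a+H}$ among the top $s$ coefficients, contributing a peak-mismatch term $\bigl(|\hat f(\beta^*_{f|a+H})|-\median_j\sqrt{y^{(j)}_{a+H}}\bigr)^2$ plus any collision residue from other top-$s$ coefficients that landed in the same coset; and (iii) $a\in S^*$ but $\beta^*_{f|a+H}$ is not in the top $s$, contributing $\median_j y^{(j)}_{a+H}$ plus collision residue. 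I then use the elementary inequality $(\sqrt{x}-\sqrt{y})^2\le |x-y|$ together with the triangle inequality to split each peak-mismatch term into a pure collision part $|y_a-\hat f(\beta^*_{f|a+H})^2|$ (the extra bucket energy beyond the peak) and a pure estimation part $|y_a-\median_j y^{(j)}_{a+H}|$.

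Once the decomposition is in place, all collision-type contributions aggregate into the hashing-error quantity, which Corollary~\ref{cor:hashing-error} bounds by $5\epsilon^2\|f\|_2^2$ with probability at least $15/16$, while Markov's inequality applied to Lemma~\ref{lem:median:estimate} yields $\sum_a(\median_j y^{(j)}_{a+H}-y_a)^2\le 16\epsilon^2\|f\|_2^2$ with probability at least $15/16$, controlling the estimation-type contributions. A union bound gives both events simultaneously with probability at least $7/8$; substituting these back into the bucket decomposition produces $\|f^*-h\|_2^2\le 49\epsilon^2\|f\|_2^2$, and Lemma~\ref{lem:output:acc:first} concludes $|\xi-\|f^*\|_2^2|\le 14\epsilon\|f\|_2^2$. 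The trickiest piece of bookkeeping I expect is case~(iii): arguing that a coset in $S^*$ whose peak is not in the top $s$ does not introduce a new, uncontrolled error term but is instead already charged to the hashing error of Corollary~\ref{cor:hashing-error}, using the defining property that its selection into $S^*$ forces its estimated energy to exceed that of at least one genuine top-$s$ coset.
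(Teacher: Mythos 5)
Your high-level plan matches the paper's: reduce via Lemma~\ref{lem:output:acc:first} to showing $\|f^*-h\|_2 \le 7\eps\|f\|_2$, control the two error sources with Corollary~\ref{cor:hashing-error} and Lemma~\ref{lem:median:estimate} (via Markov), and union-bound to $7/8$. Where you genuinely diverge is the decomposition. You expand $\|f^*-h\|_2^2$ coset by coset and run a three-way case analysis, whereas the paper introduces an intermediate $s$-sparse function $g$ whose nonzero Fourier coefficients are the \emph{true} values $\hat f(\beta^*_{f|a+H})$ at the selected cosets $a\in S^*$, and then applies the triangle inequality $\|f^*-h\|_2 \le \|f^*-g\|_2 + \|g-h\|_2$. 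That single step is exactly the separation you are trying to engineer by hand: $\|f^*-g\|_2$ absorbs the combinatorial (which-peak-ended-up-where) discrepancies between the true top-$s$ set and the selected cosets, while $\|g-h\|_2$ carries only the sampling error at the selected cosets. It avoids your cases (i)--(iii) and the bookkeeping around $(\sqrt{x}-\sqrt{y})^2\le|x-y|$ entirely, and produces the sharper constant $(\sqrt5+4)\eps < 7\eps$.

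The piece you correctly flag as trickiest --- case~(iii), a coset $a\in S^*$ whose peak is not a top-$s$ coefficient --- is also a genuine unresolved gap in your sketch. Your claim that this contribution is ``already charged to the hashing error of Corollary~\ref{cor:hashing-error}'' does not hold as stated: $err^s_h$ bounds the surplus of \emph{true} bucket energies over the top-$s$ coefficient energies and is insensitive to which cosets the algorithm adaptively placed in $S^*$. A coset $a\in S^*$ containing no top-$s$ coefficient contributes roughly its entire bucket energy $y_a$ to $\|f^*-h\|_2^2$, and $y_a$ is not controlled by the collision residue $z_a$. The swapping intuition you offer --- selecting $a$ displaces a genuine top-$s$ coset $b$, imposing an ordering on the \emph{estimated} energies --- is the right instinct, but it yields a bound in terms of the \emph{estimation} errors of $a$ and $b$, not the hashing error, and you would still need to aggregate these over all misselected cosets without picking up a spurious $\sqrt{s}$ factor through Cauchy--Schwarz. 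As written the argument does not close this case; the paper's intermediate function $g$ folds it into $\|f^*-g\|_2$ in one line, and you would need to either adopt that device or supply a detailed swapping argument of comparable force.
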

\begin{proof}
Let $g:\F_2^{\,n}\to\mathbb{R}$ be the $s$-sparse function defined by setting
$$
\hat{g}(\beta^*_{f|a+H})=\hat{f}(\beta^*_{f|a+H})
$$
for each $a\in S^*$ to be the only non-zero Fourier coefficients of $f^*$. 
Then by triangle inequality, 
$$
||f^*-h||_2\le||f^*-g||_2+||g-h||_2.
$$
Recall that $\cE_1 \ge  \dots \ge \cE_{2^n}$ are the true values of the energies of the $2^n$ Fourier coefficients corresponding to function $f : \F_2^{\,n} \to\mathbb{R}$ and $y^*_i$ is the contribution to the energy of the $i$\th bucket from the largest Fourier coefficient hashing into this bucket. 
Let $S$ be the set of indices corresponding to the buckets with nonzero energy in $f^*$ and $g$ and observe that $|S|\le s$. 
Thus, $||f^*-g||^2_2$ is at most $\sum_{i\in S}(y_i-\cE_i)$, where $y_i$ is the total energy in the $i$\th bucket.  
By Corollary~\ref{cor:hashing-error}, $\sum_{i\in S}(y_i-\cE_i)\le 5\eps^2||f||^2_2$ with probability at least $\frac{15}{16}$. 

On the other hand, $||g-h||_2^2\le16\eps^2||f||^2_2$ with probability at least $\frac{15}{16}$ by Lemma~\ref{lem:median:estimate} and Markov's inequality. 
Thus, $||f^*-h||_2\le\left(\sqrt{5}+4\right)\eps||f||_2\le7\eps||f||_2$ and by Lemma~\ref{lem:output:acc:first}, $\Big|\xi-||f^*||^2_2\Big|\le14\eps||f||_2^2$ with probability at least $\frac{7}{8}$.
\end{proof}

By Lemma~\ref{lemma:accuracy:queries}, it suffices to use $\O{\frac{s}{\eps^4}\|f\|_2^2}$ queries to bound the expected squared error of an estimator.  
Since Algorithm~\ref{alg:ee} take the median of $\ell=\O{\log\frac{1}{\eps}}$ estimators to bound the failure probability by a constant, then the total number of queries is $\O{\frac{s}{\eps^4}\|f\|_2^2\log\frac{1}{\eps}\log\frac{1}{\delta}}$ to obtain failure probability $1-\delta$. 
Hence, the query complexity follows as we assume $\|f\|_2^2 = 1$. 

Algorithm~\ref{alg:ee} runs through $\ell=\O{\log\frac{1}{\eps}}$ iterations, each time sampling $f$ at $\gamma=\O{\frac{s}{\eps^4}}$ pairs of points and updating each of the $2^d=\O{\frac{s}{\eps^4}}$ cosets. 
Hence, Algorithm~\ref{alg:ee} runs in $\O{\frac{s^2}{\eps^8}\log\frac{1}{\eps}}$ time. 
To boost the failure probability up to $1-\delta$, the total running time is $\O{\frac{s^2}{\eps^8}\log\frac{1}{\eps}\log\frac{1}{\delta}}$. 

We do not attempt to optimize runtime in Algorithm~\ref{alg:ee}, as further optimizations can be made using standard sparse Hadamard transform techniques, e.g. page 163 in~\cite{goldreich2000modern} or in \cite{levin1995randomness, ericpricecommunication} to update the empirical estimation of each coset, which improves the total running time to $\O{\frac{s}{\eps^4}\log\frac{s}{\eps^4}\log\frac{1}{\eps}\log\frac{1}{\delta}}$.
\section{Lower Bounds for $\ell_2^2$-Testing of $s$-Sparsity}
To the best of our knowledge the only lower bound known for the $s$-sparsity testing problem is due to~\cite{GOSSW11}.
Formally, they construct a hard distribution that is far from $s$-sparse in Hamming distance but since the support of the distribution is Boolean functions this also implies a lower bound under $\ell_2^2$. 
Under $\ell_2^2$-distance their Theorem 2 can be restated as follows:
\begin{theorem}[Lower bound for $\ell_2^2$ testing of Fourier sparsity~\cite{GOSSW11}]
Fix any constant $\tau > 0$. 
Let $C(\tau) = \O{\log 1/\tau}$ and $s \le 2^{n/C(\tau)}$. 
There exists a constant $c(\tau)$ so that any algorithm, which given non-adaptive query access to  $f \colon \mathbb F_2^n \to \{-1,1\}$, that distinguishes $s$-sparse functions from functions that are $c(\tau)$-far from $s$-sparse in $\ell_2^2$ distance with probability at least $2/3$ requires $\Omega(\sqrt{s})$ queries.
\end{theorem}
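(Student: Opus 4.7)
My plan is to derive the stated bound as a reduction from Gopalan et al.'s $\Omega(\sqrt{s})$ Hamming-distance lower bound for testing $\ksparsebool$. The basic translation is the elementary identity $\|f-g\|_2^2 = 4\Pr_x[f(x)\neq g(x)]$ for Boolean $f,g : \F_2^n \to \{-1,1\}$, so on Boolean pairs the two distance notions agree up to a factor of $4$, and being $\Omega(1)$-far from $\ksparsebool$ in Hamming immediately yields being $\Omega(1)$-far from $\ksparsebool$ in $\ell_2^2$. With Yao's minimax principle in mind, the remaining task is to show that the pair of distributions $(\mathcal D_\text{YES}, \mathcal D_\text{NO})$ from Gopalan et al.\ still witnesses indistinguishability in the $\ell_2^2$ framework, where $\mathcal D_\text{YES}$ is supported on Boolean $s$-sparse functions (hence trivially inside $\ksparsebool \subseteq \ksparse$) and $\mathcal D_\text{NO}$ consists of Boolean functions that are $\Omega(1)$-far from $\ksparsebool$ in Hamming.

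The substantive step is to upgrade ``far from $\ksparsebool$'' to the stronger requirement ``far from $\ksparse$.'' By Parseval, for any $f$ with $\|f\|_2=1$, the optimal real-valued $s$-sparse approximation is obtained by retaining the top $s$ Fourier coefficients, so
\[
\dist_2^2(f,\ksparse) = 1 - \max_{|S|=s}\sum_{\alpha\in S}\hat f(\alpha)^2.
\]
It therefore suffices to verify that the Boolean NO instances produced by Gopalan et al.\ have their top $s$ Fourier coefficients capturing only a $1 - c(\tau)$ fraction of the $\ell_2^2$ mass. This spread-out Fourier behavior is precisely what their construction (based on random, or suitable pseudorandom, Boolean functions whose Fourier spectra are well distributed) delivers, and the hypothesis $s \le 2^{n/C(\tau)}$ with $C(\tau) = \O{\log 1/\tau}$ is the regime in which this quantitative spread holds at the required level $c(\tau)$. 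The indistinguishability with $o(\sqrt s)$ non-adaptive queries is then inherited verbatim from the original argument: for any fixed query set of size $q$, the joint responses under $\mathcal D_\text{YES}$ and $\mathcal D_\text{NO}$ can be coupled to agree except on a ``collision event'' among the $\binom{q}{2}$ query pairs that reveals linear structure, whose probability is $O(q^2/s) = o(1)$ for $q < c\sqrt s$.

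I expect the main obstacle to be the quantitative verification of the second step: demonstrating that the NO instances are far from the full class $\ksparse$, not merely from $\ksparsebool$. A purely black-box reduction from the Hamming-distance statement does not suffice here, because a real-valued $s$-sparse approximation of a Boolean target can in principle be strictly closer than any Boolean one. Resolving this cleanly requires opening up Gopalan et al.'s NO construction and bounding its top-$s$ Fourier mass explicitly; the factor $C(\tau) = \O{\log 1/\tau}$ in the hypothesis $s \le 2^{n/C(\tau)}$ is the concrete footprint of this bound, and is precisely the reason the result applies only in that range rather than for the full range $s \le 2^{n-1}$ achieved by the authors' own lower bound.
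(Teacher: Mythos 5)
The paper does not actually prove this theorem: it states it as a restatement of Theorem~2 of~\cite{GOSSW11}, accompanied only by the one-line remark that ``they construct a hard distribution that is far from $s$-sparse in Hamming distance but since the support of the distribution is Boolean functions this also implies a lower bound under $\ell_2^2$.'' Your skeleton --- run Yao's principle on the GOSSW11 pair $(\mathcal D_\text{YES},\mathcal D_\text{NO})$, use $\|f-g\|_2^2 = 4\Pr_x[f(x)\neq g(x)]$ to move between Boolean Hamming and $\ell_2^2$, and then verify farness of the NO instances --- is exactly what that one-liner tacitly invokes, so you and the paper are on the same route.

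You are also right that the one-liner hides a real gap, and your diagnosis of it is the sharpest part of the write-up: farness of a Boolean $f$ from the subclass $\ksparsebool$ in Hamming does \emph{not} formally transfer to farness from the larger class $\ksparse$ in $\ell_2^2$, precisely because a real-valued $s$-sparse $g$ (the projection onto the top $s$ characters) could sit much closer to $f$ than any Boolean $s$-sparse function. Where your proposal stops short is in filling that gap: you suggest re-opening the GOSSW11 NO construction and directly bounding its top-$s$ Fourier mass, and you flag --- correctly --- that you have not actually carried this out. The cleaner and almost certainly intended route is GOSSW11's structural (``robustness'') theorem for Boolean Fourier-sparse functions, which states that a Boolean $f$ whose Fourier spectrum has all but an $\epsilon$ fraction of its $\ell_2^2$ mass on $s$ coefficients must be Hamming-close to a \emph{Boolean} $s$-sparse function, in the regime $s \le 2^{n/C(\tau)}$; the contrapositive upgrades ``Hamming-far from $\ksparsebool$'' to ``$\ell_2^2$-far from $\ksparse$'' exactly as needed, with $c(\tau)$ and the range restriction on $s$ inherited from that theorem. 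Until you cite (or reprove) that structural result, the argument has a genuine hole, but since the paper itself never closes it either, your proposal is if anything more careful than the source about saying so. One small further caution: the coupling-with-$O(q^2/s)$-collision-probability picture you give for the indistinguishability step is accurate only at a high level; the actual GOSSW11 argument sets up the YES and NO distributions over restrictions to random cosets and compares their induced distributions on query answers, so if you flesh this out you should either cite their Theorem~2 as a black box or reproduce that specific construction rather than the heuristic birthday bound.
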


\noindent
Below we extend this result to larger values of $s$ for non-adaptive testers of real-valued functions.

\newcommand{\g}{\mathbf g}
\renewcommand{\S}{\mathbf S}

\subsection{$\Omega(\sqrt{s}$) Lower Bound for Non-adaptive Testers}\label{sec:lb}

We show a lower bound by designing two distributions $\mathcal D_{YES}$ and $\mathcal D_{NO}$, the former supported on the class of interest and the latter being far from it, such that the total variation distance between these distributions restricted to the query set is at most $\delta$.
This implies that the query set cannot distinguish the two distributions with probability greater than $\frac{1+\delta}{2}$. 

\begin{definition}[Total Variation Distance]
The total variation distance between two random variables $P_1$ and $P_2$ with corresponding probability density functions $p_1(x),p_2(x)\in\mathbb{R}^n$ is defined as $d_{TV}(P_1,P_2)=\frac{1}{2}\int_{\mathbb{R}^n}|p_1(x)-p_2(x)|\,dx$.
\end{definition}

\begin{theorem}
\label{thm:lb:na}
For any $s \le 2^{n - 1}$, there exists a constant $c > 0$ such that any non-adaptive algorithm given query access to $f \colon \mathbb F_2^n \to \mathbb R$ such that $\|f\|_2^2 = 1\pm \epsilon$ that distinguishes whether $f$ is $s$-sparse or $f$ is $\frac 13$-far from $s$-sparse in $\ell_2^2$ with probability at least $2/3$ has to make at least $c\sqrt{s}$ queries to $f$. 
\end{theorem}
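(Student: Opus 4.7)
The plan is a Yao-style distinguishing argument. I exhibit $\mathcal D_{YES}$ supported on $\ksparse$ and $\mathcal D_{NO}$ supported on functions that are $\tfrac 13$-far from $\ksparse$ in $\ell_2^2$, and show that for any non-adaptive query set $Q$ of size $q = c\sqrt s$ (with $c$ a sufficiently small absolute constant) the induced distributions on $(f(x))_{x\in Q}$ satisfy $d_{TV}(P_{YES}|_Q, P_{NO}|_Q) < \tfrac 13$; Yao's principle then rules out any $2/3$-confidence tester. The construction is
\begin{align*}
\mathcal D_{YES}:\ & \hat f(\alpha) = g_\alpha/\sqrt{s}\text{ for }\alpha\in S,\ 0\text{ otherwise},\\
\mathcal D_{NO}:\ & \hat f(\alpha) = \sigma_\alpha/\sqrt{2^n}\text{ for every }\alpha\in\F_2^{\,n},
\end{align*}
where $S$ is a uniformly random $s$-subset, the $g_\alpha$ are i.i.d.\ $\mathcal N(0,1)$, and the $\sigma_\alpha$ are i.i.d.\ Rademachers. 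Parseval gives $\|f\|_2^2=1\pm o(1)$ with high probability in YES (a scaled $\chi^2_s$) and $\|f\|_2^2=1$ exactly in NO. YES draws are $s$-sparse by construction; in NO every Fourier coefficient has squared magnitude exactly $1/2^n$, so the best $s$-sparse approximation achieves $\dist_2^2(f,\ksparse) = 1 - s/2^n \ge \tfrac 12$ whenever $s \le 2^{n-1}$.

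For distinct query points $Q=\{x_1,\dots,x_q\}$, the vector $(f(x_i))_i$ is jointly Gaussian under $\mathcal D_{YES}$ conditional on $S$ with covariance
\[
\Sigma_S[i,j] = \tfrac{1}{s}\sum_{\alpha\in S}\chi_\alpha(x_i+x_j),
\]
and is a linear image of a Rademacher vector under $\mathcal D_{NO}$ with exact covariance $\Sigma_{NO}[i,j] = \tfrac{1}{2^n}\sum_\alpha\chi_\alpha(x_i+x_j) = \mathds{1}[i=j]$. The crucial estimate is that for $i\ne j$ the vector $y:=x_i+x_j$ is nonzero, so exactly half the $2^n$ characters satisfy $\chi_\alpha(y)=+1$; hence $s\cdot\Sigma_S[i,j]$ is a signed sum of $s$ samples drawn without replacement from a balanced $\pm 1$ population of size $2^n$. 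Hoeffding's inequality yields $|\Sigma_S[i,j]| = O(\sqrt{\log q/s})$ with high probability, and a direct variance computation gives $\E_S\|\Sigma_S-I_q\|_F^2 = O(q^2/s)$.

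To turn this into a total variation bound I pass through the Gaussian reference $\mathcal N(0,I_q)$ via the triangle inequality. The YES-to-reference piece is controlled by the Gaussian-mixture $\chi^2$ identity
\[
1+\chi^2 = \E_{S,S'}\bigl[\det(\Sigma_S)^{-1/2}\det(\Sigma_{S'})^{-1/2}\det(\Sigma_S^{-1}+\Sigma_{S'}^{-1}-I_q)^{-1/2}\bigr],
\]
which a second-order log-determinant expansion around $I_q$, using $\mathrm{tr}(\Sigma_S - I_q) = 0$ and the Frobenius bound above, shows is $O(q^2/s)$, so this piece contributes $d_{TV} = O(q/\sqrt s)$. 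The NO-to-reference piece is a multivariate invariance / Berry--Esseen estimate for a Rademacher sum with $\pm 1/\sqrt{2^n}$ loadings forming orthonormal rows; after a mild smoothing step (convolving with a Gaussian of variance $\ll 1/q$ on both sides) it contributes $o(1)$ since $q \le c\sqrt s \le c\cdot 2^{(n-1)/2} \ll \sqrt{2^n}$. Picking $c$ small enough makes the total TV strictly below $\tfrac 13$.

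The main obstacle is making the $\chi^2$ expansion rigorous: with tiny probability $\Sigma_S$ is ill-conditioned and $\det(\Sigma_S)^{-1/2}$ blows up. I would handle this by truncating $\mathcal D_{YES}$ to the event $\{\|\Sigma_S-I_q\|_{\mathrm{op}} \le \tfrac 12\}$, verified with probability $1-o(1)$ by matrix Bernstein applied to $\Sigma_S - I_q$ (a sum of $s$ bounded centered rank-one matrices), and absorbing the $o(1)$ truncation mass additively into the final total variation bound.
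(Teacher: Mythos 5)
Your construction of $\mathcal D_{YES}$ and the covariance/Frobenius-norm estimate $\E_{S}\|\Sigma_S - I_q\|_F^2 = O(q^2/s)$ match the paper exactly, and your framing via Yao's principle is the same. The divergence is in $\mathcal D_{NO}$ and in how the TV bound is assembled, and here the paper's route is considerably cleaner. The paper takes i.i.d.\ \emph{Gaussian} (not Rademacher) coefficients for $\mathcal D_{NO}$, i.e.\ $\hat f(\alpha) = \g'_\alpha/2^{n/2}$ with $\g'_\alpha \sim N(0,1)$. With that choice, $(f(\xi_1),\dots,f(\xi_q))$ restricted to the query set is \emph{exactly} $\mathcal N(0, I_q)$ — the covariance is $\frac{1}{2^n}\sum_\alpha a_\alpha a_\alpha^T = I_q$ and joint Gaussianity is automatic — so there is no reference distribution, no triangle inequality, no multivariate Berry--Esseen, and no smoothing step. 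The paper then applies a standard two-Gaussian TV bound (Corollary~2.14 of~\cite{DKKLMS16}): if $\|I - \Sigma_2^{-1/2}\Sigma_1\Sigma_2^{-1/2}\|_F \le \delta$ then $d_{TV}(\mathcal N(0,\Sigma_1),\mathcal N(0,\Sigma_2)) = O(\delta)$. Conditioning on $S$ and using convexity of TV over the mixture, $\E_S[\|I - M_S\|_F] \le q/\sqrt s$ directly yields the $\Omega(\sqrt s)$ bound, with no need for your $\chi^2$/second-moment expansion or the truncation step for ill-conditioned $\Sigma_S$. The price the paper pays is that $\|f\|_2^2 = 1$ only holds up to $\pm\epsilon$ in NO as well (whence the $\|f\|_2^2 = 1 \pm\epsilon$ hypothesis in the theorem), whereas your Rademacher choice nails $\|f\|_2^2 = 1$ exactly — but that exactness is cheap to give up.

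One genuine hazard in your route that you should be aware of: your NO distribution has \emph{discrete} support. The vector $(f(x_i))_i = 2^{-n/2} A\sigma$ lives on a finite lattice, so its TV distance to $\mathcal N(0,I_q)$ is exactly $1$ without the smoothing step — smoothing is not "mild," it is load-bearing. Making it rigorous requires arguing that the tester cannot gain by exploiting the convolution on both sides while simultaneously choosing the smoothing scale small enough to be invisible and large enough for the multivariate CLT bound (with its dimension dependence) to be $o(1)$ at $q \approx 2^{n/2}$. This is doable but adds real technical weight. Both this complication and the need for the $\chi^2$-divergence machinery and $\Sigma_S$-conditioning truncation disappear once both ensembles are Gaussian, which is exactly what the paper's choice buys.
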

\begin{proof}
We define two distributions $\mathcal D_{YES}$ and $\mathcal D_{NO}$ where $\mathcal D_{YES}$ is supported on $s$-sparse functions only and $\mathcal D_{NO}$ is supported on functions that are far from $s$-sparse. 
Then by Yao's principle it suffices to show that if the size of the query set $Q$ is at most $c \sqrt{s}$ then the total variation distance between the two distributions restricted on the query set $d_{TV}(\mathcal D_{YES} (Q), \mathcal D_{NO}(Q)) < 1/3$.
	
We now define the $\mathcal D_{YES}$ distribution.
For each $z \in 2^{[n]}$ let $\g_z \sim N(0,1)$ be an independent zero mean and unit variance Gaussian random variable.
Let $\S \subseteq 2^{[n]}$ be a random subset of fixed size $s$ chosen uniformly at random from the collection of all subsets of size exactly $s$.
Our distribution $\mathcal D_{YES}$ corresponds to a random family of functions $f_{\S}$ defined as follows:
$$f_{\mathbf S}(x) := \frac{1}{\sqrt{s}} \sum_{z \in \S} \g_z \chi_z(x).$$
The distribution $\mathcal D_{NO}$ is defined similarly, except that we fix $S = 2^{[n]}$, i.e. we set:
$$f(x) = \frac{1}{2^{n/2}} \sum_{z \in 2^{[n]}} \g'_z \chi_z(x),$$
where $\g'_z \sim N(0,1)$ are again independent and identically distributed standard normal variables. 

Note that by standard Chernoff bounds with high probability functions sampled from both distributions satisfy $\|f\|_2^2 = 1\pm\epsilon$. 
Furthermore by Chernoff bounds, with high probability functions in the support of $\mathcal D_{NO}$ are at least $\frac13$-far in $\ell_2^2$ from $s$-sparse for $s \le 2^{n - 1}$ (their expected distance is at least $1/2$). 
Consider any non-adaptive randomized algorithm that makes $q$ queries.
By Yao's principle we can fix the set of queries to form a set $Q \subseteq \mathbb F_2^n$ or size $q$.
The values of $f_{\S}$ on $Q$ form a vector with (possibly correlated) zero mean Gaussian entries.
	
Fix any $S$ of size $s$.
If $x = y$ then we have: 
$$\mathbb E_{\g}[f_{S}(x)f_{S}(y)]=\mathbb E_{\g}[f_{S}(x)^2] = \frac1s \mathbb E_{\g}\left[\left(\sum_{z_1 \in S} \g_{z_1} \chi_{z_1}(x)\right)^2\right] =  \frac1s \left(\sum_{z_1 \in S} \mathbb E_{\g}[\g_{z_1}^2]\right)= 1.$$
Computing the values of the off-diagonal entries in the covariance matrix of $f_{S}$ for $x \neq y$ we have:
\begin{align*}
\mathbb E_{\g}[f_{S}(x) f_{S}(y)] 
&= \frac1s \mathbb E_{\g}\left[\sum_{z_1 \in S} \g_{z_1} \chi_{z_1}(x) \sum_{z_2 \in S} \g_{z_2} \chi_{z_2}(y)\right] \\
&= \frac1s \mathbb E_{\g}\left[\sum_{z \in S} \g_z^2 \chi_z(x) \chi_z(y) + \sum_{z_1 \neq z_2 \in S} \g_{z_1} \chi_{z_1}(x) \g_{z_2} \chi_{z_2}(y)\right] \\
&= \frac1s \left(\sum_{z \in S} \chi_z(x) \chi_z(y) \mathbb E_{\g}\left[ \g_z^2\right]  +  \sum_{z_1 \neq z_2 \in S}\chi_{z_1}(x) \chi_{z_2}(y) \mathbb E_{\g}\left[ \g_{z_1}  \g_{z_2} \right] \right)\\
&= \frac1s \left(\sum_{z \in S} \chi_z(x) \chi_z(y)   +  \sum_{z_1 \neq z_2 \in S}\chi_{z_1}(x) \chi_{z_2}(y) \mathbb E_{\g}\left[\g_{z_1}] \mathbb E_{\g}[\g_{z_2} \right] \right)\\
&= \frac1s \sum_{z \in S} \chi_z(x) \chi_z(y)
\end{align*}
	
Let $\xi_1, \dots, \xi_q$ be the inputs in the query set $Q$.
For any fixed $z \in 2^{[n]}$ define $a_z \in \{-1,1\}^q$ to be a column vector with entries $a_{z,i} = \chi_z(\xi_i)$.
Then the covariance matrix of $f_\S(\xi_1), \dots f_\S(\xi_q)$ under the distribution $\mathcal D_{YES}$  is given by a random family of matrices $M_{\S} \in \mathbb R^{q \times q}$ defined as follows: 
$$M_{\S} = \frac{1}{s} \sum_{z\in \S} a_z a_z^T.$$
	
Similarly for $\mathcal D_{NO}$ the covariance matrix of $f(\xi_1), \dots, f(\xi_q)$ is $\frac{1}{2^n} \sum_{z \in 2^{[n]}} a_z a_z^T = I$.
	
The following standard fact allows to bound the total variation distance between two zero mean Gaussians with known covariance matrices.
\begin{fact}(See e.g. Corollary 2.14 in~\cite{DKKLMS16})
Let $\delta > 0$ be sufficiently small and let $\mathcal N(0, \Sigma_1)$ and $\mathcal N(0, \Sigma_2)$ be normal distributions with zero mean and covariance matrices $\Sigma_1$ and $\Sigma_2$ respectively.  If $\|I - \Sigma_2^{-1/2} \Sigma_1 \Sigma_2^{-1/2}\|_F \le \delta$ then:
$$d_{TV}(\mathcal N(0, \Sigma_1), \mathcal N(0, \Sigma_2)) \le \O{\delta}.$$
\end{fact}
	
Using the above fact and setting $\Sigma_1 = M_{\S}$ and $\Sigma_2 = I$ in order to show an upper bound on the total variation distance it suffices to bound the expected Frobenius norm of the difference $\mathbb E_{\S}\left[\|I - M_{\S}\|_F\right]$.
	
We have:
\begin{align*}
\mathbb E_{\S}\left[\left\|I - \frac{1}{s} \sum_{z\in \S} a_z a_z^T\right\|_F\right] &= \mathbb E_{\S} \left[\sum_{1 \le i,j \le q} \left(\delta_{ij} - \frac1s \sum_{z \in \S} \chi_z(\xi_i) \chi_z(\xi_j)\right)^2\right] \\
&=  \sum_{1 \le i \le q} \mathbb E_{\S} \left[\left(1 - \frac1s \sum_{z \in \S} \chi_z(\xi_i)^2\right)^2\right] + \sum_{1 \le i \neq j \le q} \mathbb E_{\S}\left[\left( \frac1s \sum_{z \in \S} \chi_z(\xi_i) \chi_z(\xi_j)\right)^2\right] \\
&= \frac{1}{s^2} \sum_{1 \le i \neq j \le q} \mathbb E_{\S}\left[\left( \sum_{z_1 \in \S} \chi_{z_1}(\xi_i) \chi_{z_1}(\xi_j)\right)\left( \sum_{z_2 \in \S} \chi_{z_2}(\xi_i) \chi_{z_2}(\xi_j)\right)\right] \\
&= \frac{1}{s^2} \sum_{1 \le i \neq j \le q} \mathbb E_{\S}\left[ \sum_{z \in \S} \chi_{z}(\xi_i)^2 \chi_{z}(\xi_j)^2\right] + \mathbb E_{\S}\left[ \sum_{z_1 \neq z_2 \in \S} \chi_{z_1}(\xi_i) \chi_{z_1}(\xi_j)\chi_{z_2}(\xi_i) \chi_{z_2}(\xi_j)\right] \\
& \le \frac{q^2}{s}
\end{align*}
	
Thus if $q < \sqrt{\delta s}$ we have $d_{TV}(\mathcal D_{YES}(Q), \mathcal D_{NO}(Q)) \le \O{\delta}$. By picking $\delta$ to be a sufficiently small constant it follows that no algorithm that makes less than $c \sqrt{s}$ queries for some constant $c > 0$ can distinguish $\mathcal D_{YES}$ and $\mathcal D_{NO}$ with high probability.
\end{proof}

\section*{Acknowledgements}

The authors would like to thank Piotr Indyk and Eric Price for multiple helpful discussions of this topic as well as Andrew Arnold, Arturs Backurs, Eric Blais, Michael Kapralov and Krzysztof Onak for their participation in earlier versions of this work.

\def\shortbib{0}
\bibliographystyle{alpha}
\bibliography{references}

\appendix
\section{Appendix}
\subsection{Basic Facts}
\label{app:facts}
\begin{fact}[Reduction of Property Testing to Energy Estimation of Top $s$ Fourier Coefficients]\label{fact:approx-to-testing}
Suppose we are given query access to some function $f:\F_2^{\,n}\to\mathbb{R}$ with $||f||_2^2=1$. 
Given an energy estimator of the top $s$ Fourier coefficients that uses $q_s(\eps)$ queries, there exists a property tester for $s$-sparsity with parameter $\eps$ that uses $q_s\left(\frac{\eps}{2}\right)$ queries, where $q_s(\cdot)$ is some function that depends on $\eps$.
\end{fact}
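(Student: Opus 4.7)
The plan is to reduce $s$-sparsity testing to energy estimation via Parseval's identity, with nothing more than a single threshold check on the estimator's output. The key preliminary I would establish is that when $\|f\|_2^2=1$, the $\ell_2^2$-distance to $\ksparse$ and the maximum energy on any $s$ Fourier coefficients are complementary quantities. Indeed, the closest $s$-sparse function to $f$ in $\ell_2^2$-distance is obtained by retaining the $s$ largest (in magnitude) Fourier coefficients of $f$ and zeroing out the rest, so Parseval yields
\[
\dist_2^2(f,\ksparse) \;=\; \|f\|_2^2 - \max_{|S|=s}\sum_{\alpha\in S}\hat f(\alpha)^2 \;=\; 1 - E^*,
\]
where I write $E^* := \max_{|S|=s}\sum_{\alpha\in S}\hat f(\alpha)^2$.

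Given this identity, the reduction is immediate. I would invoke the assumed energy estimator with accuracy parameter $\eps/2$ to obtain an estimate $\xi$ satisfying $|\xi - E^*|\le \eps/2$ with probability at least $2/3$, using $q_s(\eps/2)$ queries, and accept iff $\xi > 1 - \eps/2$. Correctness follows by cases: in the YES case ($f\in\ksparse$) we have $E^*=1$, so $\xi \ge 1 - \eps/2$ and the tester accepts; in the NO case ($\dist_2^2(f,\ksparse)\ge \eps$) we have $E^*\le 1-\eps$, so $\xi \le 1 - \eps/2$ and the tester rejects. Both branches succeed with the estimator's success probability of $2/3$. This is exactly the algorithm $\FFST$ defined above, and its query complexity equals that of the invoked energy estimator, namely $q_s(\eps/2)$, matching the claim. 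Any marginal concern about the boundary $\xi = 1-\eps/2$ can be absorbed by tightening the estimator accuracy to, say, $\eps/3$, at no asymptotic cost.

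There is essentially no obstacle beyond invoking Parseval, which is why the statement is recorded as a ``Fact'' rather than a theorem; the entire content is the observation that, for functions with $\|f\|_2^2=1$, the $\ell_2^2$-distance to $\ksparse$ is determined entirely by the Fourier tail and is complementary to the top-$s$ energy, so any additive $\eps/2$-approximation to the latter immediately produces an additive-error decision procedure for the former with the same query complexity.
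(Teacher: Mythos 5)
Your proof is correct and follows essentially the same route as the paper: both use Parseval to identify $\dist_2^2(f,\ksparse)$ with $1-E^*$, then run the assumed estimator at accuracy $\eps/2$ and threshold its output at $1-\eps/2$ (exactly the \FFST algorithm). Your observation about the boundary case $\xi=1-\eps/2$ applies equally to the paper's proof and is harmless, as you note.
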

\begin{proof}
Let $\ksparse$ be the class of $s$-sparse functions mapping from $\F_2^{\,n}$ to $\mathbb{R}$. 
Trivially if $f\in\ksparse$, then the sum of the top $s$ Fourier coefficients is $||f||_2^2$ and so an $\frac{\eps}{2}$-energy estimator of the top $s$ Fourier coefficients outputs a value $\xi$ with $\xi\ge||f||_2^2-\frac{\eps}{2}||f||_2^2$.

On the other hand, if for any $s$-sparse function $g$, it holds that $||f-g||_2^2\ge\eps||f||_2^2$, then the energy of the top $s$ Fourier coefficients of $f$ is at most $(1-\eps)||f||_2^2$. 
Then an $\frac{\eps}{2}$-energy estimator of the top $s$ Fourier coefficients outputs a value $\xi$ with 
\[\left|\xi-\max_{|S|=s}\sum_{\alpha\in S}\hat{f}(\alpha)^2\right|\le\frac{\eps}{2}||f||_2^2,\]
so the energy estimator outputs a value $\xi$ with $\xi\le||f||_2^2-\frac{\eps}{2}||f||_2^2$. 

Thus, the energy estimator can differentiate whether $f\in\ksparse$ or $f$ is $\eps$-far from $s$-sparsity, using $q_s\left(\frac{\eps}{2}\right)$ queries.
\end{proof}

\subsection{Poisson Summation Formula}
\label{app:poisson}
Recall the proof of the Poisson summation formula:
\begin{proofof}{Proposition \ref{prop:projection-formulas}}
For any $z \in \F_2^{\,n}$, we have that
\begin{align*}
f|_{a+H}(z) &= \underset{x \in H^\perp}{\mathbb{E}}\Big[ \sum_{\beta \in \F_2^{\,n}} \hat{f}(\beta) \chi_{\beta}(x+z) \cdot \chi_a(x) \Big] \\
&= \sum_{\beta \in \F_2^{\,n}} \hat{f}(\beta) \chi_\beta(z) \cdot \underset{x \in H^\perp}{\mathbb{E}}\left[ \chi_{\beta + a}(x)\right].
\end{align*}
Since $\underset{x \in H^\perp}{\mathbb{E}}\Big[\chi_{\beta + a}(x)\Big]$ equals 1 when $\beta+a \in H$ and 0 otherwise, we obtain 
$$
f|_{a+H}(z) = \sum_{\beta \in a+H} \hat{f}(\beta) \chi_\beta(z) 
$$
and hence
\begin{align*}
\widehat{f|}_{a + H}(\alpha) &= \underset{x \in \mathbb F^n_2}{\mathbb{E}}\left[f|_{a + H}(x) \chi_\alpha(x)\right] = \underset{x \in \mathbb F^n_2}{\mathbb{E}}\left[\sum_{\beta \in a + H} \hat f(\beta) \chi_\beta(x) \chi_\alpha(x)\right] \\
&= \sum_{\beta \in a + H} \left(\hat f(\beta) \underset{x \in \mathbb F^n_2}{\mathbb{E}}\left[\chi_\beta(x) \chi_\alpha(x)\right]\right) = \hat f(\alpha).
\end{align*}
\end{proofof}

\end{document}